\documentclass[acmsmall,screen,review,nonacm]{acmart}

\renewcommand\footnotetextcopyrightpermission[1]{}
\setcopyright{none}
\AtBeginDocument{%
  \fancypagestyle{standardpagestyle}{%
    \fancyhf{}%
    \fancyfoot[RO,LE]{\small\thepage}%
  }%
  \fancypagestyle{firstpagestyle}{%
    \fancyhf{}%
    \fancyfoot[RO,LE]{\small\thepage}%
  }%
  \pagestyle{standardpagestyle}%
}

\usepackage{fontawesome}
\usepackage{balance}
\usepackage{tikz}
\usepackage{quantikz}
\usepackage{amsfonts}
\usepackage{pifont}
\usepackage{xspace}
\usepackage{graphicx}
\usepackage{wrapfig}
\usepackage[capitalise]{cleveref}
\usepackage{thmtools}
\theoremstyle{acmplain}
\declaretheorem[name=Theorem,numberwithin=section]{theorem}

\declaretheorem[name=Lemma,sibling=theorem]{lemma}

\theoremstyle{acmdefinition}
\declaretheorem[name=Example,sibling=theorem]{example}

\theoremstyle{acmplain}

\usepackage{subcaption}
\usepackage{tcolorbox}
\usepackage{algorithm}
\usepackage[noend]{algpseudocode}
\usepackage{numprint}
\usepackage{enumitem}
\setlist[description]{leftmargin=\parindent,labelindent=\parindent}

\npthousandsep{,}

\Crefname{figure}{Fig.}{Figs.}
\Crefname{tabular}{Tab.}{Tabs.}
\Crefname{section}{\S}{\S}
\Crefname{lemma}{Lem.}{Lems.}
\Crefname{theorem}{Thm.}{Thms.}
\Crefname{corollary}{Cor.}{Cors.}
\Crefname{algorithm}{Alg.}{Algs.}
\Crefname{example}{Ex.}{Exs.}
\Crefname{definition}{Def.}{Defs.}

\newcommand{\ours}{\textsc{tzap}\xspace}
\newcommand{\ourssymb}{\textsc{tzap-symb}\xspace}
\newcommand{\ourspf}{\textsc{tzap-pf}\xspace}
\newcommand{\pyzx}{PyZX\xspace}
\newcommand{\cliffordt}{Clifford+$T$\xspace}

\newcommand{\circuit}{\ensuremath{\mathcal{C}}\xspace}

\newcommand{\CX}{\ensuremath{\mathit{CX}}\xspace}
\newcommand{\X}{\ensuremath{\mathit{X}}\xspace}
\newcommand{\Had}{\ensuremath{H}\xspace}

\newcommand{\Rz}[1]{\ensuremath{\mathit{Rz}(#1)}}

\newcommand{\sem}[1]{\ensuremath{\llbracket #1 \rrbracket}}
\newcommand{\rasem}[1]{\ensuremath{\llbracket #1 \rrbracket^{\#}}}
\newcommand{\sasem}[1]{\ensuremath{\llbracket #1 \rrbracket^{\hat{\#}}}}

\definecolor{qubitblue}{rgb}{0.0, 0.0, 0.55}
\newcommand{\q}[1]{{\color{qubitblue}{#1}}}
\renewcommand{\read}{\ensuremath{\mathsf{read}}\xspace}
\newcommand{\true}{\ensuremath{\mathsf{true}}\xspace}
\newcommand{\false}{\ensuremath{\mathsf{false}}\xspace}
\newcommand{\negate}{\ensuremath{\mathsf{neg}}\xspace}
\newcommand{\ifflip}[2]{\mathsf{if}~#1~\negate(#2)\xspace}
\definecolor{hoareblue}{rgb}{0.0, 0.0, 0.7}
\definecolor{maroon}{rgb}{0.5, 0.0, 0.0}
\newcommand{\fresh}[1]{\setlength{\fboxsep}{1pt}\colorbox{gray!25}{$#1$}}
\newcommand{\hoare}[1]{{\color{hoareblue}\left[#1\right]}}
\newcommand{\phoare}[2]{{\color{hoareblue}[#1]}_{\color{maroon}#2}}
\providecommand{\ket}{}
\renewcommand{\ket}[1]{|#1\rangle}

\newtcolorbox{takeaway}{
  colback=black!5,
  colframe=black!5,
  boxrule=0pt,
  arc=2pt,
  left=6pt, right=6pt, top=4pt, bottom=4pt,
  before upper={\textbf{Summary.}\hspace{0.5em}\ignorespaces},
}
\begin{document}

\title{Linear-Time $T$-Gate Optimization via Random Abstraction}
\author{Aws Albarghouthi}
\affiliation{%
  \institution{University of Wisconsin-Madison}%
  \city{Madison, WI}
  \country{USA}%
}
\email{aws@cs.wisc.edu}

\begin{abstract}
Quantum computers promise exponential speedups for problems in cryptography, chemistry, and optimization.
Realizing this promise requires fault tolerance: physical qubits are noisy, so logical qubits must be encoded redundantly across many physical ones using quantum error-correcting codes.
In most practical fault-tolerance schemes, $T$ gates---phase rotations by $\pi/4$---cannot be implemented transversally and instead require costly magic-state distillation protocols involving a complex set of operations.
As a result, $T$-gate count can dominate the resource budget of large-scale quantum computations, making $T$-count minimization a central bottleneck on the path to quantum advantage.
Existing $T$-count optimization tools, however, do not scale to the circuits that quantum advantage demands.

We present theoretical and practical results on $T$-gate optimization.
On the theoretical side, we give a linear-time randomized algorithm for phase folding, based on a novel randomized static analysis.
Our static analysis approximates the transition relation of a quantum circuit and is sound except with an arbitrarily small probability of error.
Our key insight is a static analysis that does not track symbolic expressions, but propagates constant-width bitstrings down the circuit.
On the practical side, our implementation, \ours, is multiple orders of magnitude faster than state-of-the-art tools---such as \pyzx, VOQC, and Feynman---closely matches their $T$-count reductions on standard benchmarks, and can optimize circuits with millions of gates within seconds on a laptop computer.
~\\~\\
\noindent
\faGithub\ \ \ \url{https://github.com/qqq-wisc/tzap}
\end{abstract}

\maketitle
\section{Introduction}
Quantum computers promise exponential speedups for a range of practically important problems---from simulating quantum chemistry~\cite{lloyd1996universal} to breaking public-key cryptography via Shor's algorithm~\cite{shor1994algorithms}.
Realizing this power requires \emph{fault tolerance}: physical qubits are inherently noisy, so logical qubits must be encoded redundantly across many physical ones using quantum error-correcting codes, and gates must be applied in a fault-tolerant manner.

In most practical fault-tolerance schemes, we use the universal \cliffordt gate set. 
While Clifford gates are efficiently simulable classically and are typically much cheaper to implement fault-tolerantly, $T$ gates---phase rotations by $\pi/4$---cannot be implemented transversally in many practical fault-tolerance schemes, and instead require costly \emph{magic-state-distillation} protocols~\cite{bravyi2005universal,bravyi2012magic,gidney2024cultivation} involving an elaborate and complex sequence of operations.
Each logical $T$ gate can demand thousands of physical operations.
As a result, $T$-gate count can dominate the resource budget of large-scale quantum computations, and minimizing the number of $T$ gates in a circuit is a central bottleneck on the path to quantum advantage.

The challenge of $T$-count reduction has attracted substantial attention, and a range of techniques have been developed.
One broad family of approaches relies on \emph{phase folding}: when two rotation gates act on the same basis states, their phases can be combined and redundant gates eliminated.
This idea has appeared in different guises across the literature~\cite{amy2014polynomial,nam2018automated,amy2025linear}, and has been shown to be highly effective in practice.
A second family of approaches uses \emph{rewriting}: ZX-calculus represents a quantum circuit as a graphical tensor-network diagram and applies a suite of local rewrite rules to reduce gate count~\cite{coecke2011interacting}.
Tools based on these ideas, such as Feynman~\cite{amy2014polynomial,amy2025linear} and \pyzx~\cite{kissinger2020pyzx}, achieve strong $T$-count reductions on standard benchmarks.

\subsection{Fast, Scalable $T$-Gate Optimization}
We study $T$-count minimization with two goals in mind.
On the \emph{theoretical} side, we seek a \emph{linear-time} algorithm.
Quantum advantage applications may involve circuits with billions of gates~\cite{babbush2026grand}, and any super-linear algorithm will be prohibitively slow at that scale.
On the \emph{practical} side, we want an approach that matches the reduction quality of existing state-of-the-art tools---fast runtime must not come at the cost of optimization power.
To this end, we present a linear-time randomized algorithm for phase folding, whose key ingredient is a lightweight randomized static analysis of the circuit.

\paragraph{The phase-folding optimization}
The heart of phase folding is proving that two rotation gates act on the same qubit state at two distinct points in the circuit.
For intuition, consider an analogy from classical program optimization.
Suppose a compiler sees two array increment statements, \textsf{A[x]++} and \textsf{A[y]++}, and wants to fuse them into a single \textsf{A[x] += 2}.
This is only sound if \textsf{x = y} \emph{for every possible execution of the program}---regardless of how the initial variables are set.
A static analysis can establish this by tracking a symbolic expression for each variable as statements execute: if the analysis can prove that the expressions for \textsf{x} and \textsf{y} are always equal, the fusion is justified.

Phase folding in quantum circuits is the same idea: two rotation gates can be merged---and one eliminated---when they act on the same basis states for all possible inputs.
Characterizing the set of reachable quantum states at every rotation is generally intractable.
Instead, existing approaches soundly approximate the set of states by, for example, only tracking the effects of classical gates~\cite{xu2022synthesizing} and treating others (like Hadamard) as non-deterministic operations~\cite{amy2025linear}. We take great inspiration from this view and use it to formalize a probabilistically sound and efficient abstraction.

\paragraph{Fast phase folding with randomized abstraction}
Our key insight is that the \emph{transition relation} of a quantum circuit can be approximated with a special algebraic structure: every tracked qubit value is a \emph{linear} function over Boolean variables for the initial qubit values and fresh outcomes introduced at Hadamards---it is a parity.
The state-equivalence question therefore reduces to a parity-equivalence question, which admits a simple and efficient randomized solution: Evaluate the two parities on random bitstrings and check if they evaluate to the same answer!

Randomized parity equivalence provides us with a key ingredient for building a very fast static analysis over quantum circuits:
Rather than maintaining symbolic expressions, we assign each qubit a random bitstring at the start of the circuit and propagate these bitstrings forward through each gate---one inexpensive bit-level operation per gate.
These bitstrings constitute an \emph{abstract domain} that probabilistically overapproximates the transition relation of a quantum circuit, achieving the same goal as the symbolic analysis with constant-time, fast, bit-level operations.
Each bit position in the bitstring is an independent random draw, so a width-$k$ bitstring encodes $k$ independent random trials; this is what makes the error probability drop exponentially with $k$.
One can view our approach as an analog of the \emph{random interpretation} framework~\cite{gulwani2003discovering} for quantum circuits, where we use randomization to achieve a very fast static analysis that can implicitly track affine equalities.

Using our randomized abstract domain, we can propagate facts down the circuit and merge rotations that share the same bitstrings.
The probability of a spurious merge between any specific pair of rotations is exponentially small in the bitstring width, but a correct run requires \emph{no} spurious merges across all pairs---a \emph{birthday-paradox} situation. By choosing the bitstring width large enough relative to the circuit size, the probability of an unsound merge is negligible.
Overall, we demonstrate that our optimization algorithm runs in expected time that is linear in the size of the circuit.
In other words, the algorithm performs one linear scan through the circuit, eliminating unnecessary rotations. As we demonstrate experimentally, this allows it to easily scale to very large circuits with millions of gates.

\subsection{Summary of Experimental Results}
We have implemented our approach in a tool called \ours and evaluated it
against five state-of-the-art optimizers spanning the main
families of $T$-count reduction techniques: VOQC~\cite{hietala2021verified},
a verified optimizer that applies the Nam et al.~\cite{nam2018automated}
rewriting and pass-scheduling protocol; \pyzx~\cite{kissinger2020pyzx} and
its Rust reimplementation QuiZX~\cite{kissinger2021quizx}, which optimize via ZX-calculus rewriting
and graph extraction; Feynman~\cite{amy2014polynomial,amy2025linear},
whose affine analysis is the closest in spirit to our
approach; and FastTODD~\cite{vandaele2024lower}, which uses Reed--Muller decoding to optimize $T$-count.

We compare on two benchmark suites. The first is the suite
that comes with Feynman and has become the \emph{de facto} standard for
$T$-count optimization, comprising arithmetic and oracle circuits from the
quantum-algorithms literature, with sizes ranging from tens of gates up to
hundreds of thousands.
The second is the Cobble suite~\cite{yuan2024cobble}, which contains larger and more structurally
varied circuits drawn from Hamiltonian simulation and quantum linear-algebra
kernels, with sizes ranging from a few thousand gates up to hundreds of
millions of gates after \cliffordt decomposition---stressing scalability
rather than just optimization quality.

Across both suites, \ours closely matches the $T$-count reductions of these tools, and typically matches them exactly on the largest circuits.
Notably, \ours runs four orders of magnitude faster than the best competing
tool on large circuits. Beyond a few hundred thousand gates, \ours finishes in milliseconds and the other tools time out after one hour.
We also conduct scaling experiments on parameterized
families of circuits with up to hundreds of millions of gates. \ours scales
linearly with circuit size, optimizing instances with hundreds of millions
of gates in minutes on a laptop, while consistently delivering $T$-count
reductions of $50\%$ to $70\%$.

\subsection{Contributions}
The main contributions of this work are:
\begin{itemize}
  \item We define a novel, randomized static analysis over an abstract domain of bitstrings that overapproximates the transition relation of a quantum circuit, sound except with an arbitrarily small probability of error.
  \item We present a linear-time phase-folding algorithm that uses the randomized static analysis to decide, with high-probability guarantees, whether two rotation gates are mergeable via a single forward pass through the circuit.
  \item We implement our approach and perform a thorough experimental evaluation demonstrating orders-of-magnitude runtime improvements over existing tools while closely matching their output quality.
\end{itemize}

\section{Overview of Our Approach}
\label{sec:overview}
We now present an overview of our approach for optimizing quantum circuits. The approach hinges on a static analysis that captures the transition relation of the program: how the state at each point in the program relates to the initial state.
\subsection{Let's Start with a Simple Language}
To illustrate and motivate our approach within the broader notion of static analysis, we will begin by examining the static analysis question for a very simple programming language.

\paragraph{A simple PL}
Suppose we have a language with the following features: A program is over a finite set of Boolean variables $x_1, x_2, \ldots, x_n$.
A program is a straight-line sequence of instructions (no loops), where each instruction is one of the following:
\begin{itemize}
\item \textbf{Negate a bit}: $\negate(x_i)$
\item \textbf{Read bit}: $\read(x_i)$
\item \textbf{Conditional}: $\ifflip{x_i}{x_j}$
\end{itemize} 

So, for example, we can write the following program:
\begin{align*}
  &\negate(x_1) \\
  &\read(x_2) \\
  &\ifflip{x_2}{x_1}
\end{align*}
Initially, each $x_i$ is either $\true$ or $\false$ (we treat the initial state as non-deterministic input).
Here, $\read(x_2)$ overwrites the value in $x_2$ with an input from the user.
So, this program negates $x_1$ (flips the bit), reads a fresh value into $x_2$, and then negates $x_1$ again if $x_2$ is true.

\paragraph{Statically analyzing programs}
Like any imperative programming language, we can reason about programs in this language statically, by propagating facts forwards through instructions.
Consider a program that has a single variable, $x_1$.
We can represent the set of all possible initial states of this program as follows:
$$\hoare{x_1 \mapsto \nu_1}$$
where $\nu_1$ is a symbolic variable denoting $x_1$'s initial value.
Since $\nu_1$ is unconstrained, $x_1$ can take any value.
In other words, we treat $\hoare{x_1 \mapsto \nu_1}$ as shorthand for the set $\{x_1  \mid x_1 = \nu_1\}$.

We can propagate this set of initial states through statements, for example:
\begin{align*}
&\hoare{x_1 \mapsto \nu_1}\\
  &\negate(x_1) \\
  &\hoare{x_1 \mapsto \neg \nu_1}
\end{align*}
The Hoare-style annotation says that the final value of $x_1$ is the negation of its original value.

Similarly, here's an example of propagating all possible initial states through a conditional statement:
\begin{align*}
  &\hoare{x_1 \mapsto \nu_1 , x_2 \mapsto \nu_2}\\
  &\ifflip{x_1}{x_2} \\
  &\hoare{x_1 \mapsto \nu_1 , x_2 \mapsto \nu_1 \oplus \nu_2}
\end{align*}
Notice how the post-condition captures the semantics of the conditional via an XOR---a true $x_1$ forces $x_2$ to flip its initial value, $\nu_2$.

The $\read$ statement, since it is non-deterministic (we don't know what the user will input), is handled by introducing fresh variables---a standard trick in symbolic execution and abstract interpretation. For example, in our first example, the $\read$ statement generates the fresh variable $\nu_3$ (highlighted).
\begin{align*}
  &\hoare{x_1 \mapsto \nu_1 , x_2 \mapsto \nu_2}\\
  &\negate(x_1) \\
  &\hoare{x_1 \mapsto \neg\nu_1 , x_2 \mapsto \nu_2}\\
  &\read(x_2) \\
  &\hoare{x_1 \mapsto \neg\nu_1 , x_2 \mapsto \fresh{\nu_3}}\\
  &\ifflip{x_2}{x_1} \\
  &\hoare{x_1 \mapsto \fresh{\nu_3} \oplus \neg\nu_1 , x_2 \mapsto \fresh{\nu_3}}
\end{align*}

For our little programming language, the annotation may grow linearly with the size of the program---as fresh variables are introduced and conditionals propagate them.
Traditionally, we want our static analysis to be sound, that is, to overapproximate the set of reachable states at each program location. But what if we relax the soundness constraint a little bit? What if, say, we're OK with it being unsound with some small probability?

\subsection{Probabilistically Sound Analysis}
\paragraph{Predicate equivalence}
Notice how our language is peculiar in that every annotation we can derive will be a sequence of XORs. If we normalize out the negations ($\neg$), all our predicates will be of the form
$$\hoare{\ldots, x_i \mapsto c \oplus \bigoplus_{\nu \in S} \nu, \ldots}$$
where $c$ is the constant 1 or 0 ($\true$ or $\false$) and $S$ is some subset of the $\nu$ variables used in our annotations. For example, in our last annotation in the example above, we have
$$x_1 \mapsto 1 \oplus \nu_3 \oplus \nu_1$$

Suppose we are given two expressions of the form above,
$$c_1 \oplus \bigoplus_{\nu \in S_1} \nu \ \ \text{and} \ \  c_2 \oplus \bigoplus_{\nu \in S_2} \nu$$
How can we check if they are equivalent? 
We can call a SAT solver, but this is overkill for such simple formulas.
Instead, we can syntactically check that the constants agree ($c_1 = c_2$) and the variable sets agree ($S_1 = S_2$).

But we can do even better: we can evaluate both expressions by instantiating the $\nu$ variables with values drawn uniformly at random.
If both expressions evaluate to the same value, we deem them equivalent.
Of course, this incurs a non-trivial probability of one-sided error---mistakenly saying that the two formulas are equivalent.
To boost the probability of success, we can instead draw many values by simply treating the $\nu$ variables as \emph{bitstrings} over some number of bits $k$, where $\oplus$ is bitwise XOR.
This diminishes the probability of failure exponentially in $k$.
So we can pick $k=64$ or $k=128$ and be extremely confident in our result.

\paragraph{A randomized static analysis}
This leads us to an interesting observation about our language: We can derive an annotation by using random bitstrings as stand-ins for the $\nu$ variables.
So, now each annotation will be of the form $\hoare{x_1 \mapsto b_1 , x_2 \mapsto b_2 ,...}$,
where each $b_i$ is a bitstring randomly sampled at the start of the analysis.
We interpret this annotation as denoting the set of program states $$\{(x_1,\ldots,x_n) \mid \forall i, j.\;\; b_i = b_j \Rightarrow x_i = x_j\}$$
Notice the one-directional implication.

This abstraction has a number of interesting consequences: 
\begin{enumerate}
\item we no longer need to maintain arbitrarily long expressions, just a single constant (bitstring) per variable $x_i$;
\item propagating the state forward is a simple, bit-level operation (bit flips and XORs);
\item there's a very small probability that our annotation does not overapproximate the set of reachable states.
\end{enumerate}

Let's consider the simple program
\begin{align*}
  &\negate(x_1) \\
  &\negate(x_2)
\end{align*}
and annotate it using our randomized approach.
For each $x_i$, we will pick a random bitstring $b_i$ (chosen uniformly at random).
So we get the following results:
\begin{align*}
  &\phoare{x_1 \mapsto b_1,  x_2 \mapsto b_2}{\delta}\\
  &\negate(x_1) \\
  &\phoare{x_1 \mapsto \neg b_1 , x_2 \mapsto b_2}{\delta}\\
  &\negate(x_2)\\
  &\phoare{x_1 \mapsto \neg b_1 , x_2 \mapsto \neg b_2}{\delta}
\end{align*}
Notice how we add a $\delta$ to our annotation; this is an upper bound on the probability of failure of the equality tests that we can make on the state.
Let's take the precondition:
$$\phoare{x_1 \mapsto b_1 , x_2 \mapsto b_2}{\delta}$$
and say we check if $x_1 = x_2$ in \emph{all} initial states. 
Since $b_1$ and $b_2$ are $k$-bitstrings, drawn uniformly at random, this means that with a probability of $$\delta \leq 1/{2^k}$$ we will incorrectly conclude that $x_1 = x_2$.

\paragraph{The moral of the story}
In summary, we have simplified our static analysis by accepting a small probability of failure: our static analysis implicitly tracks equality predicates across variables and with a low probability may say that $x_i = x_j$ when that is not always true. 
But this paper is about optimizing quantum circuits, so what was that all about?

\subsection{Quantum Circuits}
A quantum circuit operates on a set of quantum bits (\emph{qubits}), which, like our Boolean variables, take values in $\{0,1\}$.
Notationally, if we have a circuit over a single qubit $\q{q_1}$, then its state is either $\ket{0}$ or $\ket{1}$.
But it can also be a \emph{superposition} of both, written as $$\alpha\ket{0} + \beta\ket{1}$$
where $\alpha, \beta \in \mathbb{C}$ are amplitudes.
We call $\ket{0}$ and $\ket{1}$ the \emph{basis states} of a single qubit, and the amplitudes are complex numbers that satisfy $|\alpha|^2 + |\beta|^2 = 1$.
This generalizes to \emph{basis states} for more than one qubit, e.g., we can have the state $\alpha\ket{000} + \beta\ket{111}$.
A quantum circuit transforms these states.

We will consider a simple quantum circuit with the following three instructions (gates): $\X$ (NOT), $\CX$ (controlled NOT), and $\Had$ (Hadamard).
Interestingly, these have a direct correspondence to our classical language above. The $\X$ gate is $\negate$ and the $\CX$ gate is the conditional.
The $\Had$ (Hadamard) gate is the one that creates superposition: it splits a qubit's value into both $0$ and $1$ simultaneously. We will conservatively treat it as a non-deterministic operation, like $\read$.

\begin{center}
\begin{tabular}{ll}
\toprule
\textbf{Classical} & \textbf{Quantum} \\
\midrule
$\negate(x_i)$ & $\X\ \q{q_i}$ \\
$\ifflip{x_i}{x_j}$ & $\CX\ \q{q_i}\ \q{q_j}$ \\
$\read(x_i)$ & $\Had\ \q{q_i}$ \\
\bottomrule
\end{tabular}
\end{center}

What this implies is that we can apply our static analysis that we defined above to quantum circuits.
In fact, take any of our examples above, replace the classical instructions with their quantum counterparts, and you have a sound static analysis that does not track amplitudes, just reachable basis states with \emph{nonzero} amplitude.

Consider the following simple example circuit with a single qubit $\q{q_1}$
\begin{align*}
&\hoare{\q{q_1} \mapsto \nu_1}\\
  &\X\ \q{q_1} \\
  &\hoare{\q{q_1} \mapsto \neg \nu_1}
\end{align*}
This captures the transition relation of this quantum gate: if $\q{q_1}$ starts in some state $|\nu_1\rangle$,
then after executing the $\X$ statement it ends up in $|\neg \nu_1\rangle$.

Similarly, this example with two qubits demonstrates the behavior of a controlled NOT---if $\q{q_1}$ then flip $\q{q_2}$.
\begin{align*}
  &\hoare{\q{q_1} \mapsto \nu_1 , \q{q_2} \mapsto \nu_2}\\
  &\CX\ \q{q_1}\ \q{q_2} \\
  &\hoare{\q{q_1} \mapsto \nu_1 , \q{q_2} \mapsto \nu_1 \oplus \nu_2}
\end{align*}

\paragraph{Guarantees for quantum circuits} Formally, this analysis overapproximates the transition relation of the circuit---the pairs of input--output basis states related with nonzero amplitude. We can also apply our randomized static analysis with the same effect as in the classical setting: it will overapproximate the transition relation, but with a small probability it may fail.

Consider the $\CX$ above and let's run the randomized analysis, where we sample the initial values $b_i$ as bitstrings uniformly at random:
\begin{align*}
  &\phoare{\q{q_1} \mapsto b_1 , \q{q_2} \mapsto b_2}{\delta}\\
  &\CX\ \q{q_1}\ \q{q_2} \\
  &\phoare{\q{q_1} \mapsto b_1 , \q{q_2} \mapsto b_1 \oplus b_2}{\delta}
\end{align*}
Naturally, $\q{q_1}$ and $\q{q_2}$ are not guaranteed to be the same at the end of this circuit, but there's a small probability $\delta$ that our analysis may deduce that $\q{q_1}=\q{q_2}$ for all input basis states. This occurs precisely when $b_1 = b_1 \oplus b_2$, i.e., when $b_2 = 0^k$, which happens with probability $2^{-k}$---exceedingly rarely as we increase the number of bits $k$.

\subsection{The Phase-Folding Optimization}
There is another gate that we did not include above: the phase-rotation gate $\Rz{\theta}$, which multiplies the $\ket{1}$ component of a qubit's amplitude by $e^{i\theta}$, where $\theta \in \mathbb{R}$.
This gate only modifies the amplitude of a quantum state, not its basis states.
So if we include it in our analysis, we simply treat it as a no-op, since our analysis does not track amplitudes.

Non-Clifford phase rotations are expensive in fault-tolerant quantum computing, as they are typically synthesized using costly magic-state resources~\cite{bravyi2005universal,bravyi2012magic,gidney2024cultivation}.
In particular, the $T$ gate ($= \Rz{\pi/4}$) dominates the resource budget.
We therefore want to minimize the number of rotation gates in a circuit.

\paragraph{Merging adjacent rotations}
A simple optimization is to merge two rotations on the same qubit into one.
For example:
\begin{align*}
  &\Rz{\theta_1}\ \q{q_1} \\
  &\Rz{\theta_2}\ \q{q_1}
\end{align*}
can be replaced by a single $\Rz{\theta_1 + \theta_2}\ \q{q_1}$, since rotation angles simply accumulate.

\paragraph{Merging distant rotations}
But we can do better: we can merge two rotations that are far apart in the circuit, even on \emph{different} qubits, as long as they always \emph{see} the same basis states.
Consider the following circuit, shown as a sequence of gates on the left and a circuit diagram on the right:
\begin{center}
\begin{minipage}{0.35\linewidth}
\begin{align*}
  &\Rz{\theta_1}\ \q{q_1} \\
  &\CX\ \q{q_1}\ \q{q_2} \\
  &\CX\ \q{q_2}\ \q{q_1} \\
  &\CX\ \q{q_1}\ \q{q_2} \\
  &\Rz{\theta_2}\ \q{q_2}
\end{align*}
\end{minipage}%
\begin{minipage}{0.6\linewidth}
\centering
\begin{quantikz}[column sep=0.3cm]
  \lstick{$\q{q_1}$}
    & \gate{\Rz{\theta_1}} & \ctrl{1} & \targ{}   & \ctrl{1} & \qw & \qw \\
  \lstick{$\q{q_2}$}
    & \qw                  & \targ{}  & \ctrl{-1} & \targ{}  & \gate{\Rz{\theta_2}} & \qw
\end{quantikz}
\end{minipage}
\end{center}
Here, the three $\CX$ gates swap the values of $\q{q_1}$ and $\q{q_2}$.
The first rotation acts on $\q{q_1}$, and the second on $\q{q_2}$---a different qubit.
Yet because the swap (the three $\CX$'s) moves the original value of $\q{q_1}$ into $\q{q_2}$, both rotations always see the same value, and we can merge them into a single $\Rz{\theta_1 + \theta_2}\ \q{q_2}$.

Our randomized analysis detects this automatically. We sample the initial values of each qubit as a uniformly random bitstring, $\q{q_1} \mapsto b_1$ and $\q{q_2} \mapsto b_2$, and propagate them through the circuit.
Below, we illustrate how our analysis propagates the initial bitstrings $b_1$ and $b_2$ through the circuit, showing the values of $\q{q_1}$ and $\q{q_2}$ at each point in the circuit.

\begin{center}
\begin{quantikz}[column sep=0.3cm]
  \lstick{$\q{q_1}$}
    & \push{\color{hoareblue}b_1} & \gate{\Rz{\theta_1}}
    & \push{\color{hoareblue}b_1} & \ctrl{1}
    & \push{\color{hoareblue}b_1} & \targ{}
    & \push{\color{hoareblue}b_2} & \ctrl{1}
    & \push{\color{hoareblue}b_2} & \qw
    & \push{\color{hoareblue}b_2} & \qw \\
  \lstick{$\q{q_2}$}
    & \push{\color{hoareblue}b_2} & \qw
    & \push{\color{hoareblue}b_2} & \targ{}
    & \push{\color{hoareblue}b_1 \oplus b_2} & \ctrl{-1}
    & \push{\color{hoareblue}b_1 \oplus b_2} & \targ{}
    & \push{\color{hoareblue}b_1} & \gate{\Rz{\theta_2}}
    & \push{\color{hoareblue}b_1} & \qw
\end{quantikz}
\end{center}

Notice that, at the first $\Rz{\theta_1}$, qubit $\q{q_1}$ holds $b_1$; at the second, $\q{q_2}$ also holds $b_1$.
So, no matter what values of $b_1$ and $b_2$ we start with, we will be able to deduce that merging the two rotations is sound.
This, however, is not always true, and our analysis may incur a tiny failure probability $\delta$ that vanishes as the width of the bitstring $b_i$ grows.

To make this example concrete, take $k = 3$ and draw $b_1 = 101$, $b_2 = 011$. The same propagation produces:
\begin{center}
\begin{quantikz}[column sep=0.3cm]
  \lstick{$\q{q_1}$}
    & \push{\color{hoareblue}101} & \gate{\Rz{\theta_1}}
    & \push{\color{hoareblue}101} & \ctrl{1}
    & \push{\color{hoareblue}101} & \targ{}
    & \push{\color{hoareblue}011} & \ctrl{1}
    & \push{\color{hoareblue}011} & \qw
    & \push{\color{hoareblue}011} & \qw \\
  \lstick{$\q{q_2}$}
    & \push{\color{hoareblue}011} & \qw
    & \push{\color{hoareblue}011} & \targ{}
    & \push{\color{hoareblue}110} & \ctrl{-1}
    & \push{\color{hoareblue}110} & \targ{}
    & \push{\color{hoareblue}101} & \gate{\Rz{\theta_2}}
    & \push{\color{hoareblue}101} & \qw
\end{quantikz}
\end{center}
At the first $\Rz{\theta_1}$, $\q{q_1}$ holds $101$; at the second, $\q{q_2}$ also holds $101$. The bitstrings agree, so the analysis concludes that the rotations can be merged.

\paragraph{When folding should not happen}
Not every pair of rotations can be merged.
Consider the circuit
\begin{center}
\begin{minipage}{0.6\linewidth}
\centering
\begin{quantikz}[column sep=0.3cm]
  \lstick{$\q{q}$} & \gate{\Rz{\theta_1}} & \gate{\Had} & \gate{\Rz{\theta_2}} & \qw
\end{quantikz}
\end{minipage}
\end{center}
with two rotations on the same qubit with a Hadamard in between.
The Hadamard mixes the $Z$-basis, so on different input paths $\q{q}$ arrives
at the second rotation holding different values---the two rotations do not see the
same bit, and merging them would change the circuit's behavior.

Applying our randomized analysis generates the following annotation:
\begin{center}
\begin{quantikz}[column sep=0.3cm]
  \lstick{$\q{q}$}
    & \push{\color{hoareblue}b}  & \gate{\Rz{\theta_1}}
    & \push{\color{hoareblue}b}  & \gate{\Had}
    & \push{\color{hoareblue}b'} & \gate{\Rz{\theta_2}}
    & \push{\color{hoareblue}b'} & \qw
\end{quantikz}
\end{center}
Qubit $\q{q}$ starts with a uniformly random bitstring $b$, which is carried
unchanged through $\Rz{\theta_1}$; the Hadamard then draws a fresh, independent
bitstring $b'$.
Since $b$ and $b'$ are independent uniform draws, $b \neq b'$ with probability
$1 - 2^{-k}$, so the two rotation sites carry different bitstrings and the
analysis correctly declines to merge them with a high probability.

\paragraph{A fast algorithm}
In what follows, we formalize our approach and show how it yields a phase-folding algorithm that is \emph{linear} in the size of the circuit for a constant bitstring width $k$.
The algorithm walks through the circuit from beginning to end, maintaining the results of the probabilistic analysis.
Since each qubit maps to a constant-width bitstring, we can check whether we have seen a matching qubit state before, using a constant-time hash lookup---a technique akin to Zobrist hashing~\cite{zobrist1970new}.
This, as we shall see in \cref{sec:evaluation}, results in a fast and scalable optimization tool.

\section{Phase Folding, Formally}
\label{sec:formal}
We now formally define quantum circuits, a relational semantics, and the phase-folding problem.

\subsection{Quantum Circuits and their Semantics}
\paragraph{Quantum circuits}
We define a quantum circuit $\circuit$ as a sequence of gates
$g_1, g_2, \ldots, g_m$ over a set of qubits $\q{q_1},\ldots, \q{q_n}$.
We restrict our gates to the set $$\{\CX, \Had, \X, \Rz{\theta}\}$$ 
where
$\theta \in \mathbb{R}$.
The $T$ gate is $\Rz{\pi/4}$, the $S$ gate is $\Rz{\pi/2}$, and the $Z$ gate is $\Rz{\pi}$.

\paragraph{Weighted-relation semantics}
To reason about circuits for our purposes, we will give our circuits a semantics
as a weighted relation on \emph{basis states} in $\{0,1\}^n$.
Think of a \emph{quantum state} as a function $\psi \colon \{0,1\}^n \to \mathbb{C}$
that assigns a complex amplitude to each basis state.
A circuit $\circuit$ then induces a relation
\[
  \sem{\circuit} \;\colon\; \{0,1\}^n \times \{0,1\}^n \;\to\; \mathbb{C}
\]
where $\sem{\circuit}(x, x')$ is the total amplitude with which
$\circuit$ maps input basis state $x$ to output basis state $x'$.
We write $x_\q{q}$ for the value of qubit $\q{q}$ in basis state $x$,
and $x[\q{q} \mapsto v]$ for the basis state that agrees with $x$
everywhere except qubit $\q{q}$, which is set to $v$.
For example, for the basis state $x=01$ over qubits $\q{q_1}, \q{q_2}$, we have $x_{\q{q_1}} = 0$, $x_{\q{q_2}} = 1$, and $x[\q{q_1} \mapsto 1] = 11$.

The semantics of each gate is given in \Cref{fig:gate-semantics}.

\begin{example}
Consider the gate $\X\ \q{q_1}$ and suppose we have two qubits in the circuit.
For input state $x = 00$, the gate flips $\q{q_1}$, so 
$\sem{\X\ \q{q_1}}(00, 10) = 1$ and  $\sem{\X\ \q{q_1}}(00, x') = 0$ for all $x' \neq 10$.
Similarly, $\sem{\X\ \q{q_1}}(01, 11) = 1$.
\end{example}

\begin{example}
Consider $\Had\ \q{q_1}$ on two qubits.
For input $x = 00$, the gate fans out to both values of $\q{q_1}$:
$\sem{\Had\ \q{q_1}}(00, 00) = \tfrac{1}{\sqrt{2}}$ and
$\sem{\Had\ \q{q_1}}(00, 10) = \tfrac{1}{\sqrt{2}}$.
\end{example}

\paragraph{Composition} Composing two circuits means summing over intermediate basis states:
\[
  \sem{\circuit_1 \,;\, \circuit_2}(x, x')
  \;=\;
  \sum_{x'' \in \{0,1\}^n}
    \sem{\circuit_1}(x, x'') \cdot \sem{\circuit_2}(x'', x')
\]

\begin{figure}[t]
\centering\small
\begin{minipage}{0.95\columnwidth}
\begin{minipage}[t]{0.48\columnwidth}
\begin{align*}
  \sem{\Rz{\theta}\ \q{q}}(x, x') &=
    \begin{cases}
      e^{i\theta \cdot x_\q{q}} & \text{if } x' = x \\
      0 & \text{otherwise}
    \end{cases}
\\[6pt]
  \sem{\CX\ \q{c}\ \q{t}}(x, x') &=
    \begin{cases}
      1 & \text{if } x' = x[\q{t} \mapsto x_\q{t} \oplus x_\q{c}] \\
      0 & \text{otherwise}
    \end{cases}
\end{align*}
\end{minipage}
\hfill
\begin{minipage}[t]{0.48\columnwidth}
\begin{align*}
  \sem{\X\ \q{q}}(x, x') &=
    \begin{cases}
      1 & \text{if } x' = x[\q{q} \mapsto 1 - x_\q{q}] \\
      0 & \text{otherwise}
    \end{cases}
\\[6pt]
  \sem{\Had\ \q{q}}(x, x') &=
    \begin{cases}
      \dfrac{(-1)^{x_\q{q} \cdot x'_\q{q}}}{\sqrt{2}}
        & \text{if } x'_\q{r} = x_\q{r} \text{ for all } \q{r} \neq \q{q} \\[6pt]
      0 & \text{otherwise}
    \end{cases}
\end{align*}
\end{minipage}
\end{minipage}
\caption{Gate semantics as weighted input/output relations.
Each gate $g$ denotes a function
$\sem{g} \colon \{0,1\}^n \times \{0,1\}^n \to \mathbb{C}$
assigning to each pair of basis states the amplitude with which $g$ maps the first to the second.
Sequential composition multiplies weights and sums over intermediate basis states.}
\label{fig:gate-semantics}
\end{figure}

\paragraph{Circuit equivalence}
Two circuits are \emph{equivalent}, written $\circuit_1 \equiv \circuit_2$,
when $\sem{\circuit_1} = \sem{\circuit_2}$---i.e., they induce the same weighted relation on basis states.

\subsection{The Phase-Folding Conditions}
We are now ready to state the condition under which two $Z$-rotations can be merged.
Consider a circuit of the form
\[
  \circuit_1 \;;\; \Rz{\theta}\ \q{q} \;;\;
  \circuit_2 \;;\; \Rz{\theta'}\ \q{q'} \;;\; \circuit_3
\]
The following theorem gives a sufficient condition for merging the two rotations into $\Rz{\theta + \theta'}\ \q{q'}$, even though they are on different qubits and are embedded within a larger circuit.
(While phase folding is well-studied, we have not found a general formulation of phase folding in the literature where rotations appear within a larger circuit, and so we formalize it here, as it is the basis for our optimization algorithm.)

\begin{theorem}[Phase folding]
\label{thm:phase-merge}
Consider the following pair of circuits:
\begin{align*}
  \circuit_{\mathrm{left}}
    \;&=\; \circuit_1 \;;\; \Rz{\theta}\ \q{q} \;;\; \circuit_2 \;;\; \Rz{\theta'}\ \q{q'} \;;\; \circuit_3 \\
  \circuit_{\mathrm{right}}
    \;&=\; \circuit_1 \;;\; \circuit_2 \;;\; \Rz{\theta + \theta'}\ \q{q'} \;;\; \circuit_3 .
\end{align*}
If for all basis states $x_0,x,x'$ with
$\sem{\circuit_1}(x_0,x) \neq 0$ and
$\sem{\circuit_2}(x,x') \neq 0$ we have
$x_\q{q} = x'_\q{q'}$,
then $\circuit_{\mathrm{left}} \equiv \circuit_{\mathrm{right}}$.
\end{theorem}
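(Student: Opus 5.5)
Since composition is associative and $\circuit_3$ is common to both circuits, it suffices to prove
\[
\circuit_1 ; \Rz{\theta}\ \q{q} ; \circuit_2 ; \Rz{\theta'}\ \q{q'} \;\equiv\; \circuit_1 ; \circuit_2 ; \Rz{\theta+\theta'}\ \q{q'} .
\]
Once these prefixes are shown to induce the same weighted relation, appending $\circuit_3$ and summing over the intermediate basis state gives $\circuit_{\mathrm{left}} \equiv \circuit_{\mathrm{right}}$. I would therefore fix arbitrary basis states $x_0$ and $x'$ and expand both sides with the composition rule and the diagonal semantics of $\Rz{}$ from \Cref{fig:gate-semantics}.

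First, $\sem{\Rz{\theta}\ \q{q}}$ is diagonal. Hence the sum over the state immediately after it collapses, and the left prefix evaluates to
\[
\sum_{x} \sem{\circuit_1}(x_0,x)\, e^{i\theta x_{\q{q}}}\, \sem{\circuit_2}(x,x')\, e^{i\theta' x'_{\q{q'}}} .
\]
The same collapse applied to the final rotation turns the right prefix into
\[
\sum_{x} \sem{\circuit_1}(x_0,x)\, \sem{\circuit_2}(x,x')\, e^{i(\theta+\theta')x'_{\q{q'}}} .
\]

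Next, I compare the two sums term by term in $x$. If $\sem{\circuit_1}(x_0,x)=0$ or $\sem{\circuit_2}(x,x')=0$, both summands vanish. Otherwise the hypothesis gives $x_{\q{q}} = x'_{\q{q'}}$, so $e^{i\theta x_{\q{q}}} e^{i\theta' x'_{\q{q'}}} = e^{i(\theta+\theta')x'_{\q{q'}}}$ and the two summands coincide. The sums are therefore equal for every $x_0$ and $x'$.

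The only point needing care is the order of the summations. The hypothesis has to be applied inside the sum, to each path $(x_0,x,x')$ separately, and not to aggregate amplitudes. Terms where $\sem{\circuit_2}(x,x')\neq0$ but the paths cancel in total are still handled correctly, because we compare summand by summand. No other obstacle is expected, since everything else is reindexing finite sums.
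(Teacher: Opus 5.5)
Your proposal is correct and follows essentially the same argument as the paper: collapse the diagonal rotation factors in the composed sum, then apply the hypothesis summand by summand, since every nonzero summand satisfies $x_{\q{q}} = x'_{\q{q'}}$. The only cosmetic difference is that you first split off $\circuit_3$ using associativity of composition, whereas the paper keeps the factor $\sem{\circuit_3}(x',y')$ inside a single double sum over $x,x'$.
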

\begin{proof}
By the composition rule,
\begin{align*}
  \sem{\circuit_{\mathrm{left}}}(y,y')
  &=
  \sum_{a,b,c,d}
    \sem{\circuit_1}(y,a)
    \cdot \sem{\Rz{\theta}\ \q{q}}(a,b)
    \cdot \sem{\circuit_2}(b,c)
    \cdot \sem{\Rz{\theta'}\ \q{q'}}(c,d)
    \cdot \sem{\circuit_3}(d,y') .
\end{align*}
 $\sem{\Rz{\theta}\ \q{q}}(a,b)$ is zero unless
$a=b$, and in the case $a=b$ it contributes the phase
$e^{i\theta a_\q{q}}$.
Likewise, $\sem{\Rz{\theta'}\ \q{q'}}(c,d)$ is zero unless $c=d$, and then
contributes $e^{i\theta' c_\q{q'}}$.
Thus the sum reduces to
\[
  \sem{\circuit_{\mathrm{left}}}(y,y')
  =
  \sum_{x,x'}
    \sem{\circuit_1}(y,x)
    \cdot e^{i\theta x_\q{q}}
    \cdot \sem{\circuit_2}(x,x')
    \cdot e^{i\theta' x'_\q{q'}}
    \cdot \sem{\circuit_3}(x',y') .
\]
Consider any nonzero summand in this last expression.
Then $\sem{\circuit_1}(y,x) \neq 0$ and
$\sem{\circuit_2}(x,x') \neq 0$, so the theorem premise applies with
$x_0 = y$ and gives $x_\q{q} = x'_\q{q'}$.
Therefore the two phase factors in that summand combine as
\[
  e^{i\theta x_\q{q}} e^{i\theta' x'_\q{q'}}
  =
  e^{i(\theta+\theta')x'_\q{q'}} .
\]
Zero summands contribute nothing, so substituting this equality into the sum
gives
\[
  \sem{\circuit_{\mathrm{left}}}(y,y')
  =
  \sum_{x,x'}
    \sem{\circuit_1}(y,x)
    \cdot \sem{\circuit_2}(x,x')
    \cdot e^{i(\theta+\theta')x'_\q{q'}}
    \cdot \sem{\circuit_3}(x',y') ,
\]
which is exactly $\sem{\circuit_{\mathrm{right}}}(y,y')$.
\end{proof}

\section{The Parity Equivalence Problem}
\label{sec:analysis}
Our randomized static analysis is based on the problem of checking equivalence of Boolean formulas of a certain structure: \emph{affine functions}, or \emph{parities}.
This section formalizes this simple class of formulas and states \Cref{lem:parity-eq}, the key lemma that makes our randomized analysis work.

\paragraph{Parities}
Let $V = \{v_0, v_1, v_2, \ldots\}$ be a set of Boolean variables.
A \emph{parity} $p$ is a function over Boolean variables that has the following form:
\[
  p \;=\; c \oplus \bigoplus_{v \in S} v
\]
where $S \subseteq V$ is a finite set, $c \in \{0, 1\}$ is a constant, and $\oplus$ is XOR.
Given a \emph{valuation} $\alpha \colon V \to \{0,1\}$, the value of $p$ under $\alpha$ is
\[
  \alpha(p) \;=\; c \oplus \bigoplus_{v \in S} \alpha(v).
\]
Two parities are \emph{equal} if they agree on every valuation; equivalently, if they share the same constant $c$ and variable set $S$.
Checking equality symbolically reduces to comparing the sets $S$ and constants $c$.

\paragraph{Parity equality via sampling}
A simpler equality test is to draw a uniformly random bitstring of width $k$ for each variable and evaluate the two parities on bitstrings.
We extend a $k$-bitstring valuation $\alpha \colon V \to \{0,1\}^k$ to parities by treating the constant $c \in \{0,1\}$ as the bitstring $c^k$ (i.e., $0^k$ or $1^k$) and performing all XORs bitwise:
\[
  \alpha(p) \;=\; c^k \;\oplus\; \bigoplus_{v \in S} \alpha(v) \;\in\; \{0,1\}^k.
\]
\begin{example}
For example, with $k = 4$, $p_1 = v_1 \oplus v_2$, and $p_2 = v_1 \oplus v_3$, if we draw the following valuation, 
\[
  \alpha(v_1) = 1011, \quad \alpha(v_2) = 1100, \quad \alpha(v_3) = 0010,
\]
then $\alpha(p_1) = 1011 \oplus 1100 = 0111$ and $\alpha(p_2) = 1011 \oplus 0010 = 1001$, so we correctly declare $p_1 \neq p_2$.
\end{example}
The lemma below bounds the probability that two unequal parities are falsely declared equal by this test.

\begin{lemma}[Randomized parity equality]
\label{lem:parity-eq}
Let $p_1$ and $p_2$ be two parities over $V$, and let $k \geq 1$ be a bitstring width.
Let $\alpha \colon V \to \{0,1\}^k$ be a uniformly random valuation
(each variable independently uniform on $\{0,1\}^k$). Then:
\begin{enumerate}
  \item If $p_1 = p_2$, then $\alpha(p_1) = \alpha(p_2)$ with probability $1$.
  \item If $p_1 \neq p_2$, then $\alpha(p_1) = \alpha(p_2)$ with probability at most $2^{-k}$.
\end{enumerate}
\end{lemma}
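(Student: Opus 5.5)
The plan is to reduce both claims to a single observation about the XOR of the two parities. Write $p_1 = c_1 \oplus \bigoplus_{v\in S_1} v$ and $p_2 = c_2 \oplus \bigoplus_{v\in S_2} v$. Since bitwise XOR on $\{0,1\}^k$ is associative and commutative, and every element is its own inverse, we have
\[
  \alpha(p_1) \oplus \alpha(p_2) \;=\; (c_1\oplus c_2)^k \;\oplus\; \bigoplus_{v \in S_1 \triangle S_2} \alpha(v),
\]
where $S_1 \triangle S_2$ is the symmetric difference. Variables in $S_1\cap S_2$ cancel because $\alpha(v)\oplus\alpha(v)=0^k$. The event $\alpha(p_1)=\alpha(p_2)$ is exactly the event that this right-hand side equals $0^k$.

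For part (1), $p_1 = p_2$ means, as noted in the paper, that $c_1=c_2$ and $S_1=S_2$. The right-hand side is then identically $0^k$ for every valuation, so equality holds with probability $1$.

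For part (2), I split on whether $S_1 \triangle S_2$ is empty. If it is empty, then $p_1\neq p_2$ forces $c_1\neq c_2$. The right-hand side is then the constant $1^k \neq 0^k$, so the probability of a false match is $0 \le 2^{-k}$.

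If it is nonempty, I fix some $u \in S_1\triangle S_2$ and condition on the values $\alpha(v)$ for all $v \neq u$. Under this conditioning the event becomes $\alpha(u) = w$ for a fixed bitstring $w$ determined by the other variables and the constants. Because $\alpha(u)$ is uniform on $\{0,1\}^k$ and independent of the others, this has conditional probability exactly $2^{-k}$. Averaging over the conditioning gives probability exactly $2^{-k}$, which is at most $2^{-k}$.

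The only step needing care is this conditioning/independence argument. It is the standard fact that XOR with an independent uniform string produces a uniform string, so I do not expect a real obstacle. The main point to be careful about is the degenerate case where only the constants differ, which the case split above handles.
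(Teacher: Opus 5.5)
Your proposal is correct and follows essentially the same route as the paper: XOR the two parities to obtain constant $c_1\oplus c_2$ and variable set $S_1 \triangle S_2$, split on whether the symmetric difference is empty, and in the nonempty case use that one uniform, independent variable makes the XOR uniform, giving probability exactly $2^{-k}$. Your explicit conditioning on the other variables is just a more detailed write-up of the paper's ``uniform XOR independent is uniform'' step.
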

\begin{proof}
\emph{Part~(1).}
If $p_1 = p_2$, they agree on every valuation, so $\alpha(p_1) = \alpha(p_2)$ for every $\alpha$, with probability $1$.

\emph{Part~(2).}
Writing $p_j = c_j \oplus \bigoplus_{v \in S_j} v$ for $j \in \{1,2\}$, define
\[
  p \;=\; p_1 \oplus p_2 \;=\; c_p \oplus \bigoplus_{v \in S_p} v,
  \qquad
  c_p \;=\; c_1 \oplus c_2,
  \quad
  S_p \;=\; S_1 \triangle S_2,
\]
where $\triangle$ is symmetric difference.
Since $p_1 \neq p_2$, we have $p \neq 0$, so either $S_p \neq \emptyset$ or $c_p = 1$.

\emph{Case 1: $S_p = \emptyset$.} Then $c_p = 1$ and $\alpha(p) = 1^k$, so $\alpha(p_1) \neq \alpha(p_2)$ deterministically. The collision probability is $0 \leq 2^{-k}$.

\emph{Case 2: $S_p \neq \emptyset$.} Pick any $v \in S_p$ and write
\[
  \alpha(p)
  \;=\;
  \alpha(v) \;\oplus\; \underbrace{c_p^k \oplus \bigoplus_{v' \in S_p \setminus \{v\}} \alpha(v')}_{=:\; c'},
\]
where $c'$ is determined by the remaining variables and the constant $c_p$.
Since $\alpha(v)$ is uniformly distributed over $\{0,1\}^k$ and is independent of $c'$, $\alpha(p) = \alpha(v) \oplus c'$ is also uniform on $\{0,1\}^k$. Therefore
\[
  \Pr\bigl[\alpha(p_1) = \alpha(p_2)\bigr]
  \;=\;
  \Pr\bigl[\alpha(p) = 0^k\bigr]
  \;=\;
  2^{-k}.
\]
In either case, the probability that $\alpha(p_1) = \alpha(p_2)$ is at most $2^{-k}$. \qedhere
\end{proof}

\begin{example}
Let $p_1 = v_0 \oplus v_1$ and $p_2 = v_0 \oplus v_2$, so $p_1 \neq p_2$.
Draw $k = 4$ bitstrings uniformly at random:
\[
  \alpha(v_0) = 1011, \quad \alpha(v_1) = 0110, \quad \alpha(v_2) = 0110.
\]
Then $\alpha(p_1) = 1011 \oplus 0110 = 1101$ and $\alpha(p_2) = 1011 \oplus 0110 = 1101$.
The two unequal parities collide here---a false positive that occurs with probability $2^{-4} = 1/16$.
With $k = 128$ such collisions are astronomically rare.
\end{example}

\section{Randomized Static Analysis and Phase Folding}
\label{sec:randomized}

We are now ready to present our randomized static analysis and show how it can be used to
realize a very fast phase-folding optimization algorithm.

\paragraph{What we are abstracting}
Our goal is to design an abstraction of the relational circuit semantics
$\sem{\circuit}$.
The abstraction should overapproximate the pairs of basis states $x,x'$ such that $\sem{\circuit}(x,x') \neq 0$ and be precise enough to detect when two rotations can be folded,
yet cheap enough to compute in a single linear pass. Recall from
\Cref{thm:phase-merge} that the phase-folding side condition is a statement about
$\sem{\circuit}$: every input--output pair of basis states related by the circuit
must agree on the values of two specific qubits.

\paragraph{The abstraction}
We build an abstract domain in which each qubit is mapped to a $k$-bitstring,
initialized uniformly at random and updated by a small set of bitwise \emph{transfer
functions}, one per gate. Throughout the circuit, we treat the equality of two
qubits' bitstrings as evidence that the qubits hold the same value in every
reachable state. In other words, our abstraction implicitly tracks equalities between qubits.

\paragraph{Soundness with high probability}
This is a one-sided guarantee: we do not claim to detect every true equality, but for each equality query we rely on, if the two queried qubit values \emph{can} disagree on some reachable transition, the analysis spuriously assigns them the same bitstring with probability at most $2^{-k}$.
The parity equivalence lemma (\Cref{lem:parity-eq}) is the key ingredient: each basis state reachable at any program point is a parity of the initial qubit values and the Hadamard outcomes, so equality between two qubits' values reduces to equality of parities, which we test by sampling.

\subsection{Randomized Static Analysis}
We are now ready to define our randomized static analysis, which propagates a randomized abstract state through the circuit, one gate at a time, as illustrated informally in \Cref{sec:overview}.

\paragraph{Randomized abstract state}
Fix a width $k \geq 1$ for bitstrings.
A \emph{randomized abstract state} $\sigma$ is a map that assigns to each qubit $\q{q}$ a
bitstring $\sigma(\q{q}) \in \{0,1\}^k$.
We are going to treat $\sigma$ as an abstract quantum state,
and propagate it down the circuit, one gate at a time, by abstractly interpreting the semantics of the circuit to be transformations of such maps.
Intuitively, $\sigma$ will be an approximation of the relational semantics $\sem{\circuit}$.

We create the initial state $\sigma_0$ by drawing each qubit $\q{q}$'s bitstring independently and uniformly:
\[
  \sigma_0(\q{q}) \;\sim\; \mathrm{Uniform}(\{0,1\}^k),
\]
where $\mathrm{Uniform}(\{0,1\}^k)$ is the uniform distribution over bitstrings of length $k$.

\paragraph{Randomized abstract transfer functions}
Each gate updates $\sigma$ into $\sigma'$ using the following rules. Note that each rule updates the bitstring of at most one qubit, while keeping all others the same:
\begin{align*}
  \rasem{\CX\ \q{c}\ \q{t}}(\sigma) &\colon \quad \sigma'(\q{t}) \;=\; \sigma(\q{t}) \oplus \sigma(\q{c}) \\
  \rasem{\X\ \q{q}}(\sigma) &\colon \quad \sigma'(\q{q}) \;=\; \sigma(\q{q}) \oplus \mathbf{1}_k \\
  \rasem{\Had\ \q{q}}(\sigma) &\colon \quad \sigma'(\q{q}) \;\sim\; \mathrm{Uniform}(\{0,1\}^k) \\
  \rasem{\Rz{\theta}\ \q{q}}(\sigma) &\colon \quad \sigma' \;=\; \sigma
\end{align*}
Note that $\mathbf{1}_k = (1,\ldots,1)$ and $\oplus$ is bitwise XOR.
The $\Had$ rule draws a fresh uniform bitstring, independently of all prior draws,
because the Hadamard gate severs the linear relationship between a qubit and the inputs.
Every update is a single word-level operation.

\begin{example}
Take $k = 3$ and a circuit on two qubits $\q{q_1}, \q{q_2}$.
Suppose the initial bitstrings, drawn at random, are
\[
  \sigma_0(\q{q_1}) = 101, \qquad \sigma_0(\q{q_2}) = 011.
\]
We propagate $\sigma_0$ through three gates, getting the following abstract states along the circuit (in blue):
\begin{align*}
  &\phoare{\q{q_1} \mapsto 101,\; \q{q_2} \mapsto 011}{}\\
  &\CX\ \q{q_1}\ \q{q_2} \\
  &\phoare{\q{q_1} \mapsto 101,\; \q{q_2} \mapsto 101 \oplus 011 = 110}{}\\
  &\X\ \q{q_1} \\
  &\phoare{\q{q_1} \mapsto 101 \oplus 111 = 010,\; \q{q_2} \mapsto 110}{}\\
  &\Had\ \q{q_2} \\
  &\phoare{\q{q_1} \mapsto 010,\; \q{q_2} \mapsto \fresh{100}}{}
\end{align*}
The bitstring \fresh{100} is a freshly sampled uniform bitstring (shaded).
After the Hadamard, $\q{q_2}$'s bitstring is independent of $\q{q_1}$'s,
reflecting the fact that the gate breaks the linear relationship between $\q{q_2}$ and the inputs.
\end{example}

\subsection{Soundness of the Analysis}
\label{sec:soundness}
The information our analysis extracts from a circuit is \emph{equalities between
qubits}: whenever two qubits carry the same bitstring, we treat
them as guaranteed to hold the same value in every reachable state. Soundness is
therefore a statement about these equalities---we want every equality the randomized
analysis detects to reflect a genuine semantic equality in $\sem{\circuit}$.

The following theorem formalizes soundness of our randomized analysis.
Notice that this is a one-sided guarantee:
If two qubits are always the same in all basis states across a transition, then we don't care if our analysis misses this fact. But if they are not always the same, we want our analysis to detect that with high probability.

\begin{theorem}
  [Soundness of the randomized analysis]
  \label{lem:rand-equality}
  Suppose there exist basis states $x,x'$ such that $\sem{\circuit}(x,x') \neq 0$ and $x_\q{q} \neq x'_\q{q'}$.
  Let
  $\sigma = \rasem{\circuit}(\sigma_0)$.
  Then, $$\Pr[\sigma_0(\q{q}) = \sigma(\q{q'})] \leq 2^{-k}.$$
\end{theorem}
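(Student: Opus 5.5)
Our plan is to introduce a symbolic counterpart of the randomized analysis and then reduce the statement to \Cref{lem:parity-eq}. Fix a set of Boolean variables $V$ containing one variable $\nu_{\q{r}}$ for each qubit $\q{r}$, standing for its initial value, and one fresh variable $h_j$ for the $j$-th Hadamard gate in $\circuit$. Define a symbolic abstract state $\pi$ mapping each qubit to a parity over $V$. Start from $\pi_0(\q{r}) = \nu_{\q{r}}$, and use the transfer functions of $\rasem{\cdot}$ read symbolically:
\begin{itemize}
  \item $\CX$ XORs the control parity into the target;
  \item $\X$ flips the constant;
  \item the $j$-th $\Had$ on $\q{r}$ sets $\pi(\q{r}) = h_j$;
  \item $\Rz{\theta}$ is the identity.
\end{itemize}
Let $\pi = \pi_{\circuit}$ be the final symbolic state.

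\textbf{Step 1 (coupling).} Let $\alpha$ be the valuation with $\alpha(\nu_{\q{r}}) = \sigma_0(\q{r})$, and let $\alpha(h_j)$ be the $j$-th fresh bitstring drawn by the $\Had$ rule. These draws are mutually independent and uniform on $\{0,1\}^k$, so $\alpha$ is a uniformly random valuation in the sense of \Cref{lem:parity-eq}. A straightforward induction on the gates then shows $\sigma(\q{r}) = \alpha(\pi(\q{r}))$ at every point, since each transfer function commutes with evaluation. Evaluation is bitwise, and the $\X$ rule's $\mathbf{1}_k$ matches the convention that the constant $c$ evaluates to $c^k$. In particular, $\sigma_0(\q{q}) = \alpha(\nu_{\q{q}})$ and $\sigma(\q{q'}) = \alpha(\pi(\q{q'}))$.

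\textbf{Step 2 (semantic invariant).} We show by induction on the circuit: for every $x, x'$ with $\sem{\circuit}(x,x') \neq 0$, there is a Boolean valuation $\beta \colon V \to \{0,1\}$ with $\beta(\nu_{\q{r}}) = x_{\q{r}}$ and $\beta(\pi(\q{r})) = x'_{\q{r}}$ for all $\q{r}$.

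The induction goes through the composition sum. If $\sem{\circuit;g}(x,x') \neq 0$, some summand is nonzero, so there is an $x''$ with $\sem{\circuit}(x,x'') \neq 0$ and $\sem{g}(x'',x') \neq 0$. We obtain $\beta$ for $x''$ from the hypothesis and extend it gate by gate using \Cref{fig:gate-semantics}.
\begin{itemize}
  \item For $\CX$, $\X$ and $\Rz{\theta}$, the nonzero entries are exactly the deterministic updates, which match the parity updates.
  \item For the $j$-th $\Had$ on $\q{r}$, every nonzero entry leaves the other qubits unchanged. We redefine $\beta(h_j) = x'_{\q{r}}$; this is legitimate because $h_j$ does not yet occur in any parity, so no other constraint changes.
\end{itemize}
The subtle point is that we only need nonzero amplitudes of individual paths. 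Cancellation in the sum can only remove pairs, and we argued from some nonzero summand, so the overapproximation is safe.

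\textbf{Step 3 (conclusion).} Apply Step 2 to the witnesses $x, x'$ from the hypothesis. This gives a $\beta$ with $\beta(\nu_{\q{q}}) = x_{\q{q}} \neq x'_{\q{q'}} = \beta(\pi(\q{q'}))$. Hence the parities $\nu_{\q{q}}$ and $\pi(\q{q'})$ are unequal, since they differ on $\beta$. Note that $\pi$ is a deterministic function of $\circuit$ and does not depend on the random draws. By Step 1 and part~(2) of \Cref{lem:parity-eq},
\[
  \Pr[\sigma_0(\q{q}) = \sigma(\q{q'})] = \Pr[\alpha(\nu_{\q{q}}) = \alpha(\pi(\q{q'}))] \leq 2^{-k}.
\]

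We expect the main obstacle to be Step 2, and specifically the Hadamard case together with the freshness argument. There we must make sure that reassigning $h_j$ does not disturb the earlier constraints $\beta(\nu_{\q{r}}) = x_{\q{r}}$, nor the other qubits' parities. We would handle this by making freshness explicit in the induction hypothesis: at each prefix, $\beta$ is only constrained on the variables occurring so far.
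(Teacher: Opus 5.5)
Your proposal is correct and follows essentially the same route as the paper. Your Step~2 is the paper's \Cref{lem:symbolic-soundness}, proved by the same induction through the composition sum with the same fresh-variable extension in the $\Had$ case, and your Steps~1 and~3 match the paper's coupling of the random draws to a uniform valuation of the symbolic parities followed by \Cref{lem:parity-eq}(2); your explicit remark that the final symbolic parity is a deterministic function of $\circuit$, independent of the draws, is a point the paper uses but leaves implicit.
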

Notice that the soundness theorem above relates $\sigma_0$ and $\sigma$---the initial and final abstract states.

\paragraph{Proof idea: Correspondence with symbolic analysis}
The key idea underlying the proof of soundness is that the randomized analysis can be viewed as a random valuation of a symbolic analysis that tracks each qubit's parity, as we illustrated in \Cref{sec:overview}.

We start by defining a symbolic analysis that tracks each qubit's value as a parity of the initial qubit values and the Hadamard outcomes.
A \emph{symbolic abstract state} $\hat{\sigma}$ is a map that assigns to each qubit $\q{q}$ a
parity $\hat{\sigma}(\q{q})$ over a set of Boolean variables $V$.
We create the initial state $\hat{\sigma}_0$ by assigning to each qubit $\q{q}$ its own dedicated variable $v_\q{q} \in V$.

Each gate updates $\hat{\sigma}$ into $\hat{\sigma}'$ using the following rules (notice the one-to-one correspondence with the randomized analysis):
\begin{align*}
  \sasem{\CX\ \q{c}\ \q{t}}(\hat{\sigma}) &\colon \quad \hat{\sigma}'(\q{t}) \;=\; \hat{\sigma}(\q{t}) \oplus \hat{\sigma}(\q{c}) \\
  \sasem{\X\ \q{q}}(\hat{\sigma}) &\colon \quad \hat{\sigma}'(\q{q}) \;=\; \hat{\sigma}(\q{q}) \oplus 1 \\
  \sasem{\Had\ \q{q}}(\hat{\sigma}) &\colon \quad \hat{\sigma}'(\q{q}) \;=\; u \quad \text{where $u \in V$ is a fresh variable} \\
  \sasem{\Rz{\theta}\ \q{q}}(\hat{\sigma}) &\colon \quad \hat{\sigma}' \;=\; \hat{\sigma}
\end{align*}

The parities computed by the analysis mention two kinds of variables: the initial variables $v_\q{q}$, whose values are determined by the input basis state, and the fresh variables introduced by the $\Had$ rule, which we call the \emph{Hadamard variables} and collect in a set $U \subseteq V$.
Given an input $x$ and an assignment $h \colon U \to \{0,1\}$ of the Hadamard variables, let $\alpha_{x,h}$ denote the valuation with
\[
  \alpha_{x,h}(v_\q{q}) = x_\q{q} \quad \text{for every qubit } \q{q},
  \qquad
  \alpha_{x,h}(u) = h(u) \quad \text{for every } u \in U.
\]
Soundness says that the parities computed by the symbolic analysis faithfully describe every input--output pair the circuit can realize: fixing the initial variables to the input, some assignment of the Hadamard variables makes each qubit's parity evaluate to that qubit's output value.

\begin{lemma}[Soundness of the symbolic analysis]
\label{lem:symbolic-soundness}
Let $\hat{\sigma} = \sasem{\circuit}(\hat{\sigma}_0)$, with Hadamard variables $U$.
For all basis states $x, x'$ with $\sem{\circuit}(x, x') \neq 0$,
there exists an assignment $h \colon U \to \{0,1\}$ such that
\[
  \alpha_{x,h}(\hat{\sigma}(\q{q})) = x'_\q{q} \quad \text{for every qubit } \q{q}.
\]
\end{lemma}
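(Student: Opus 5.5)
We argue by induction on the number of gates $m$ in $\circuit = g_1;\ldots;g_m$, strengthening the statement slightly so that it composes: for every prefix $\circuit_j = g_1;\ldots;g_j$ with symbolic state $\hat{\sigma}_j = \sasem{\circuit_j}(\hat{\sigma}_0)$ and Hadamard variables $U_j$, and for all $x, x'$ with $\sem{\circuit_j}(x,x') \neq 0$, there is $h \colon U_j \to \{0,1\}$ with $\alpha_{x,h}(\hat{\sigma}_j(\q{q})) = x'_\q{q}$ for every qubit $\q{q}$. Because fresh variables are never reused, $U_j \subseteq U_{j+1} \subseteq U$. An assignment on $U_j$ can therefore be extended arbitrarily to $U$ without changing the value of any parity in $\hat{\sigma}_j$. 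The final statement is the case $j = m$.

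\emph{Base case.} For the empty circuit, $\sem{\circuit_0}(x,x') \neq 0$ iff $x = x'$, and $\hat{\sigma}_0(\q{q}) = v_\q{q}$. Hence $\alpha_{x,h}(v_\q{q}) = x_\q{q} = x'_\q{q}$ for any $h$.

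\emph{Inductive step.} Suppose $\sem{\circuit_j; g}(x,x') \neq 0$. By the composition rule, this sum over intermediate states $x''$ is nonzero, so at least one summand is nonzero. That gives some $x''$ with $\sem{\circuit_j}(x,x'') \neq 0$ and $\sem{g}(x'',x') \neq 0$. The induction hypothesis yields $h$ with $\alpha_{x,h}(\hat{\sigma}_j(\q{q})) = x''_\q{q}$ for all $\q{q}$. We then case on $g$ using \Cref{fig:gate-semantics}:
\begin{itemize}
  \item For $\Rz{\theta}$, nonzero weight forces $x' = x''$, and the symbolic state is unchanged.
  \item For $\X\ \q{q}$, nonzero weight forces $x' = x''[\q{q} \mapsto 1 - x''_\q{q}]$, and $\alpha_{x,h}(\hat{\sigma}_j(\q{q}) \oplus 1) = x''_\q{q} \oplus 1$.
  \item For $\CX\ \q{c}\ \q{t}$, nonzero weight forces $x'_\q{t} = x''_\q{t} \oplus x''_\q{c}$, and evaluation is a homomorphism for $\oplus$.
\end{itemize}
In each of these cases the same $h$ works. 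For $\Had\ \q{q}$, nonzero weight only forces $x'_\q{r} = x''_\q{r}$ for $\q{r} \neq \q{q}$, while $x'_\q{q}$ is arbitrary. We extend $h$ to the fresh variable $u$ by setting $h(u) = x'_\q{q}$. The other qubits' parities do not mention $u$, so their values are unchanged.

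The only point needing care is the extraction of a single nonzero summand from a nonzero sum. This direction is valid: a nonzero sum has some nonzero term. The converse would fail because amplitudes can cancel, but we never need it, which is exactly why the analysis is only an overapproximation. We should also check that the inductive hypothesis is stated for arbitrary inputs $x$, so that it applies with the given $x$ and the chosen $x''$. Beyond that, the proof is routine bookkeeping about fresh variables.
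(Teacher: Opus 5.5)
Your proposal is correct and follows essentially the same route as the paper's proof: induction on the number of gates, extracting a single nonzero summand from the composition sum, reusing the same assignment $h$ for $\CX$, $\X$, and $\Rz{\theta}$, and extending $h$ at the fresh Hadamard variable by $h(u) = x'_\q{q}$. Your ``strengthening'' to all prefixes is not really stronger, since prefixes are themselves circuits; it is the same induction hypothesis the paper applies to $\circuit_0$ in $\circuit = \circuit_0 \,;\, g$.
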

We note that the above symbolic analysis and its soundness proof are not particularly novel and are similar to existing analyses, such as that of \citet{amy2025linear}. Nonetheless, we include a proof in \Cref{app:soundness-details} for completeness (see Appendix).

\paragraph{Proof of soundness of the randomized analysis}
With the soundness of the symbolic analysis in hand, we can now prove soundness of the randomized analysis.
The proof below observes that the randomized analysis is equivalent to evaluating the symbolic analysis under a uniformly random valuation of the variables, and then applies \Cref{lem:parity-eq} to bound the probability of a false equality.

\begin{proof}[Proof of \Cref{lem:rand-equality}]
Suppose there exist basis states $x, x'$ with $\sem{\circuit}(x, x') \neq 0$ and $x_\q{q} \neq x'_\q{q'}$.
By \Cref{lem:symbolic-soundness}, there is an assignment $h$ of the Hadamard variables with
$\alpha_{x,h}(\hat{\sigma}(\q{q'})) = x'_\q{q'}$, while $\alpha_{x,h}(\hat{\sigma}_0(\q{q})) = \alpha_{x,h}(v_\q{q}) = x_\q{q}$ by definition.
The two parities $\hat{\sigma}_0(\q{q})$ and $\hat{\sigma}(\q{q'})$ thus disagree under the valuation $\alpha_{x,h}$, so
\[
  \hat{\sigma}_0(\q{q}) \;\neq\; \hat{\sigma}(\q{q'}) \quad \text{as parities.}
\]
Now let $\beta \colon V \to \{0,1\}^k$ be the valuation that assigns to each initial variable $v_\q{q}$ the bitstring drawn for $\sigma_0(\q{q})$, and to each Hadamard variable the fresh bitstring drawn by the $\Had$ rule that introduced it; since all of these draws are independent and uniform, $\beta$ is a uniformly random valuation.
The randomized analysis is precisely the symbolic analysis evaluated under $\beta$: initially $\sigma_0(\q{q}) = \beta(v_\q{q}) = \beta(\hat{\sigma}_0(\q{q}))$ by construction, and evaluation under $\beta$ commutes with each pair of transfer rules---XOR of parities becomes bitwise XOR of bitstrings, the constant $1$ becomes $\mathbf{1}_k$, and a fresh variable evaluates to its fresh draw---so by induction on the circuit, $\sigma(\q{q'}) = \beta(\hat{\sigma}(\q{q'}))$.
Since $\hat{\sigma}_0(\q{q}) \neq \hat{\sigma}(\q{q'})$, \Cref{lem:parity-eq}(2) gives
\[
  \Pr\bigl[\sigma_0(\q{q}) = \sigma(\q{q'})\bigr]
  \;=\;
  \Pr\bigl[\beta(\hat{\sigma}_0(\q{q})) = \beta(\hat{\sigma}(\q{q'}))\bigr]
  \;\leq\; 2^{-k}. \qedhere
\]
\end{proof}

\subsection{Randomized Phase Folding}

We are now ready to connect the randomized analysis to the phase-folding criterion
of \Cref{thm:phase-merge}.
Consider two rotations at distinct program points: if the randomized state assigns
the same bitstring to both rotation qubits at those program points, the rotations
can be merged.

\begin{theorem}[Soundness of randomized phase folding]
\label{thm:rand-soundness}
Consider two rotation sites in a circuit of the form
\[
  \circuit_1 \;;\; \Rz{\theta}\ \q{q} \;;\;
  \circuit_2 \;;\; \Rz{\theta'}\ \q{q'} \;;\; \circuit_3 .
\]
Let the randomized analysis produce states
\[
  \sigma^{(1)} = \rasem{\circuit_1}(\sigma_0)
  \qquad\text{and}\qquad
  \sigma^{(2)} = \rasem{\circuit_2}(\sigma^{(1)}).
\]
If the premise of \Cref{thm:phase-merge} fails---i.e., there exist basis states
$x_0,x,x'$ such that
$\sem{\circuit_1}(x_0,x) \neq 0$,
$\sem{\circuit_2}(x,x') \neq 0$, and
$x_\q{q} \neq x'_\q{q'}$---then
\[
  \Pr\bigl[\sigma^{(1)}(\q{q}) = \sigma^{(2)}(\q{q'})\bigr]
  \;\leq\; 2^{-k}.
\]
\end{theorem}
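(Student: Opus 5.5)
The plan is to reduce this to the symbolic analysis and \Cref{lem:parity-eq}, much as in the proof of \Cref{lem:rand-equality}, but with two symbolic states instead of an initial and a final one. Run the symbolic analysis on the prefix $\circuit_1 ; \circuit_2$, writing $\hat{\sigma}^{(1)} = \sasem{\circuit_1}(\hat{\sigma}_0)$ and $\hat{\sigma}^{(2)} = \sasem{\circuit_2}(\hat{\sigma}^{(1)})$. Let $U_1$ be the Hadamard variables introduced in $\circuit_1$ and $U_2$ those introduced in $\circuit_2$. The rotations are no-ops in both analyses, so we may ignore $\Rz{\theta}\ \q{q}$ when passing from $\circuit_1$ to $\circuit_2$. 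The goal is to show that $\hat{\sigma}^{(1)}(\q{q}) \neq \hat{\sigma}^{(2)}(\q{q'})$ as parities.

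\textbf{Step 1: a simultaneous witness.} Fix the witnesses $x_0, x, x'$. We need one assignment $h$ of $U_1 \cup U_2$ under which $\alpha_{x_0,h}(\hat{\sigma}^{(1)}(\q{r})) = x_\q{r}$ and $\alpha_{x_0,h}(\hat{\sigma}^{(2)}(\q{r})) = x'_\q{r}$ for every qubit $\q{r}$. Applying \Cref{lem:symbolic-soundness} to $\circuit_1$ gives $h_1$ on $U_1$ with the first property. Applying \Cref{lem:symbolic-soundness} to $\circuit_2$ alone, started from a fresh initial symbolic state $v'_\q{r}$, gives $h_2$ on $U_2$ and parities $\hat{\tau}(\q{r})$ with $\alpha_{x,h_2}(\hat{\tau}(\q{r})) = x'_\q{r}$.

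One then checks by induction over the gates of $\circuit_2$ that $\hat{\sigma}^{(2)}$ is obtained from $\hat{\tau}$ by substituting $\hat{\sigma}^{(1)}(\q{r})$ for each $v'_\q{r}$. This holds because every transfer rule commutes with substitution: XOR stays XOR, the constant is unchanged, and fresh variables are fresh. Evaluating under $h = h_1 \cup h_2$, the substituted variables take the values $x_\q{r}$, so $\hat{\sigma}^{(2)}(\q{r})$ evaluates to $x'_\q{r}$.

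I expect this compositionality step to be the main obstacle. An alternative is to reprove \Cref{lem:symbolic-soundness} in a strengthened form that tracks every intermediate program point along a single nonzero path. That form would follow from the same induction used in the appendix.

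\textbf{Step 2: distinct parities.} Under this $h$, the parity $\hat{\sigma}^{(1)}(\q{q})$ evaluates to $x_\q{q}$ and $\hat{\sigma}^{(2)}(\q{q'})$ evaluates to $x'_\q{q'}$. These values differ by hypothesis, so the two parities are distinct.

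\textbf{Step 3: from symbolic to randomized.} Let $\beta$ assign to each initial variable its $\sigma_0$ draw and to each Hadamard variable its fresh draw. These draws are independent and uniform, so $\beta$ is a uniformly random valuation. The same commuting-induction argument as in the proof of \Cref{lem:rand-equality} gives $\sigma^{(1)}(\q{r}) = \beta(\hat{\sigma}^{(1)}(\q{r}))$ and $\sigma^{(2)}(\q{r}) = \beta(\hat{\sigma}^{(2)}(\q{r}))$. Then \Cref{lem:parity-eq}(2) yields
\[
  \Pr\bigl[\sigma^{(1)}(\q{q}) = \sigma^{(2)}(\q{q'})\bigr] \leq 2^{-k}.
\]
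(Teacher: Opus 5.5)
Your proposal is correct and follows the paper's proof. Both run the symbolic analysis through $\circuit_1 ; \circuit_2$, obtain one Hadamard assignment witnessing $x$ and $x'$ at the two program points, conclude that $\hat{\sigma}^{(1)}(\q{q}) \neq \hat{\sigma}^{(2)}(\q{q'})$ as parities, and apply \Cref{lem:parity-eq}(2) under the uniform valuation $\beta$. The only difference is in Step~1: the paper takes your stated alternative, continuing the gate cases of the appendix induction from $\hat{\sigma}^{(1)}$ and $x$ to extend $h_1$ over $U_2$, whereas your substitution argument uses \Cref{lem:symbolic-soundness} as a black box at the cost of a routine commutation induction. Both are sound.
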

\begin{proof}
Run the symbolic analysis through $\circuit_1 \;;\; \circuit_2$, and let
$\hat{\sigma}^{(1)}$ and $\hat{\sigma}^{(2)}$ be the symbolic states after
$\circuit_1$ and after $\circuit_1 \;;\; \circuit_2$, respectively.
Let $U_1$ be the Hadamard variables introduced while analyzing $\circuit_1$,
and let $U_2$ be those introduced while analyzing $\circuit_2$.
By \Cref{lem:symbolic-soundness}, applied to $\circuit_1$ and the pair
$(x_0,x)$, there is an assignment $h_1 \colon U_1 \to \{0,1\}$ such that
\[
  \alpha_{x_0,h_1}(\hat{\sigma}^{(1)}(\q{r})) = x_\q{r}
  \quad \text{for every qubit } \q{r}.
\]
Now continue the symbolic and concrete executions through $\circuit_2$, starting
from symbolic state $\hat{\sigma}^{(1)}$ and concrete state $x$.
The gate cases in the proof of \Cref{lem:symbolic-soundness} extend $h_1$ to an
assignment $h \colon U_1 \cup U_2 \to \{0,1\}$ such that
\[
  \alpha_{x_0,h}(\hat{\sigma}^{(2)}(\q{r})) = x'_\q{r}
  \quad \text{for every qubit } \q{r}.
\]
Since the variables in $U_2$ are fresh, they do not occur in
$\hat{\sigma}^{(1)}$, so this extension still satisfies
$\alpha_{x_0,h}(\hat{\sigma}^{(1)}(\q{q})) = x_\q{q}$.
Thus the same valuation $h$ makes
$\hat{\sigma}^{(1)}(\q{q})$ evaluate to $x_\q{q}$ and
$\hat{\sigma}^{(2)}(\q{q'})$ evaluate to $x'_\q{q'}$.
Since these values differ, the two parities are syntactically unequal.
The randomized analysis is the symbolic analysis evaluated under an independent
uniform random valuation, so \Cref{lem:parity-eq}(2) gives the stated
$2^{-k}$ bound.
\end{proof}

\section{A Linear-Time Phase-Folding Algorithm}
\label{sec:rand-algorithm}
We are now ready to present a linear-time algorithm for phase folding.
The algorithm uses the randomized abstraction from the previous section to detect which rotations can be combined.
The algorithm makes a single left-to-right pass over the input circuit $\circuit$ and, for a fixed bitstring width $k$, runs in expected $O(n + m)$ time,
where $n$ is the number of qubits and $m$ is the number of gates.
It outputs an optimized circuit $\circuit'$.

The algorithm, shown in \Cref{alg:fold}, maintains two structures:
\begin{itemize}
  \item the randomized abstract state $\sigma$, updated gate by gate using
    the transfer functions of \Cref{sec:randomized};
  \item a hashmap $M \colon \{0,1\}^k \rightharpoonup (\theta, \q{q})$ that,
    for each previously seen rotation $\Rz{\theta_1}\ \q{q_1}$, maps its abstract
    state $u = \sigma(\q{q_1})$ to the angle $\theta_1$ and qubit $\q{q_1}$ where
    a later matching rotation should be folded.
\end{itemize}
\Cref{thm:rand-soundness} makes $M$ the natural structure:
the merging condition $\sigma(\q{q_1}) = \sigma'(\q{q_2})$ reduces to
a hashmap lookup on a $k$-bit key, resolved in expected $O(1)$ time.

The main loop scans the circuit once, one gate at a time, from left to right.
On $\CX$, $\X$, and $\Had$ the algorithm updates $\sigma$ by the transfer
functions of \Cref{sec:randomized}.
On $\Rz{\theta}\ \q{q}$ it consults $M$ at key $u = \sigma(\q{q})$:
if an entry $(\theta', \q{q'})$ is present, the two rotations are merged---%
the earlier rotation is deleted from the output and the angle is
accumulated onto the later one, which remains at $\q{q}$;
if the combined angle is a multiple of $2\pi$, both rotations are eliminated.
Otherwise, $(\theta, \q{q})$ is inserted into $M$ at key $u$,
making the current rotation the new candidate for a future merge.

\begin{algorithm}[t]
\caption{Randomized phase folding.}
\label{alg:fold}
\begin{algorithmic}[1]
\Require circuit $\circuit = g_1 \,;\, \cdots \,;\, g_m$ on $n$ qubits; width $k$
\Ensure circuit $\circuit'$ equivalent to $\circuit$ with high probability
\State $\sigma(\q{q}) \gets \mathrm{Uniform}(\{0,1\}^k)$ for each qubit $\q{q}$
\State $M \gets \emptyset$ \Comment{hashmap $\{0,1\}^k \rightharpoonup (\theta, \q{q})$}
\State $\circuit' \gets \epsilon$
\For{each gate $g$ in $\circuit$}
  \If{$g = \Rz{\theta}\ \q{q}$}
    \State $u \gets \sigma(\q{q})$
    \If{$M(u) = (\theta', \q{q'})$ is defined}
      \State delete the emitted occurrence of $\Rz{\theta'}\ \q{q'}$ from $\circuit'$
      \State $\theta \gets \theta + \theta'$; \quad remove $u$ from $M$
      \If{$\theta \equiv 0 \pmod{2\pi}$} \textbf{continue} \EndIf
    \EndIf
    \State $M(u) \gets (\theta, \q{q})$
    \State append $\Rz{\theta}\ \q{q}$ to $\circuit'$
  \Else
    \State $\sigma \gets \rasem{g}(\sigma)$ \Comment{$\CX$, $\X$, or $\Had$ transfer}
    \State append $g$ to $\circuit'$
  \EndIf
\EndFor
\State \Return $\circuit'$
\end{algorithmic}
\end{algorithm}

\begin{example}
Consider the swap circuit from the overview (reproduced below) with $k = 3$,
initial bitstrings $\sigma(\q{q_1}) = 101$ and $\sigma(\q{q_2}) = 011$.
The table below traces the algorithm gate by gate, showing $\sigma$ and the hashmap $M$ after each step.

\begin{center}
\small
\begin{tabular}{lll}
\toprule
\textbf{Gate} & $\sigma$ & $M$ \\
\midrule
(init) & $\q{q_1} \mapsto 101,\ \q{q_2} \mapsto 011$ & $\emptyset$ \\
$\Rz{\theta_1}\ \q{q_1}$ & unchanged & $101 \mapsto (\theta_1, \q{q_1})$ \\
$\CX\ \q{q_1}\ \q{q_2}$ & $\q{q_2} \mapsto 110$ & unchanged \\
$\CX\ \q{q_2}\ \q{q_1}$ & $\q{q_1} \mapsto 011$ & unchanged \\
$\CX\ \q{q_1}\ \q{q_2}$ & $\q{q_2} \mapsto 101$ & unchanged \\
$\Rz{\theta_2}\ \q{q_2}$ & unchanged & $M(101)$ hit: merge into $\Rz{\theta_1+\theta_2}\ \q{q_2}$ \\
\bottomrule
\end{tabular}
\end{center}
At the second rotation, $\sigma(\q{q_2}) = 101$ matches the key stored by the first rotation,
so the algorithm folds $\theta_1$ into $\theta_2$ and deletes the earlier gate from the output.
\end{example}

\paragraph{Correctness}
Intuitively, the hashmap $M$ remembers, for every rotation emitted so far,
the bitstring that was sitting on its qubit at the moment it was emitted.
When a new rotation $\Rz{\theta}\ \q{q}$ arrives, the algorithm asks a single
question: does the current bitstring $\sigma(\q{q})$ match one that was
recorded earlier?
If it does, the two rotations have seen the same parity on every path
through the intervening circuit---exactly the condition of
\Cref{thm:rand-soundness}---and folding them is sound, up to the $2^{-k}$
chance of a spurious bitstring collision.
If $\sigma(\q{q})$ has since been disturbed---for instance by a $\Had$
on $\q{q}$, which draws a fresh uniform bitstring, or by a $\CX$ that mixes
another qubit into it---the new value almost certainly no longer matches the
stored key, the lookup misses, and no unsound merge is attempted.

Each individual pair of rotations suffers an unsound merge with probability at most $2^{-k}$, but the algorithm must succeed on \emph{all} pairs simultaneously.
To ensure that we have a safe accounting of the total error probability, we need to take a union bound over all pairs of rotations that the algorithm might attempt to merge.

A union bound over the at most $\binom{m}{2}$ rotation pairs yields the following correctness guarantee.

\begin{theorem}[Correctness]
\label{thm:alg-correctness}
Let $\circuit'$ be the output of \Cref{alg:fold}
on a circuit $\circuit$ with at most $m$ gates.
Then $\circuit' \equiv \circuit$ with probability at least $1 - \binom{m}{2} \cdot 2^{-k}$.
\end{theorem}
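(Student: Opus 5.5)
The plan is to define a single ``bad event'' whose complement forces every merge performed by \Cref{alg:fold} to be sound, bound that event by a union bound using \Cref{thm:rand-soundness}, and then show that on the good event the output is equivalent to the input by repeated application of \Cref{thm:phase-merge}.

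\emph{Bad event.} Index the rotation gates of $\circuit$ by their positions $i<j$. Since the transfer functions never depend on rotations, the abstract state before each original gate position is a fixed function of the random draws, independent of what the algorithm does with $M$. For each pair $(i,j)$ of rotation positions, with $\Rz{\theta_i}\ \q{q_i}$ at $i$ and $\Rz{\theta_j}\ \q{q_j}$ at $j$, write $\circuit = \circuit_1;\Rz{\theta_i}\ \q{q_i};\circuit_2;\Rz{\theta_j}\ \q{q_j};\circuit_3$. Call the pair \emph{bad} if the premise of \Cref{thm:phase-merge} fails for this decomposition, yet $\sigma^{(1)}(\q{q_i})=\sigma^{(2)}(\q{q_j})$. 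By \Cref{thm:rand-soundness} each bad pair occurs with probability at most $2^{-k}$. There are at most $\binom{m}{2}$ pairs, so the union bound gives $\Pr[\text{some pair bad}]\le\binom{m}{2}2^{-k}$.

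\emph{Good event.} Assume no pair is bad. I will prove by induction on the number of processed gates the following invariant. The emitted prefix $\circuit'$, with its rotations at their emitted positions, is equivalent to the processed prefix of $\circuit$. Moreover, every live key $u$ in $M$ points to an emitted rotation $\Rz{\Theta}\ \q{p}$ that absorbed original rotations at positions $i_1<\dots<i_r$. All these positions, and the emitted one, carry the key $u$ and satisfy the equal-value premise pairwise. When rotation $j$ hits key $u$, the key equality, together with the absence of bad pairs, gives the premise of \Cref{thm:phase-merge} between the emitted rotation's position and $j$. Instantiating \Cref{thm:phase-merge} with $\circuit_1$, $\circuit_2$, $\circuit_3$ taken from the current equivalent circuit then justifies moving the accumulated angle onto $j$. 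When the combined angle is $0 \bmod 2\pi$, $\Rz{\theta}$ is the identity by \Cref{fig:gate-semantics}, so dropping it is also sound.

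\emph{Main obstacle.} The delicate point is that \Cref{thm:phase-merge} is applied to the partially optimized circuit, not to $\circuit$. The rotations already removed or moved sit inside $\circuit_1$ and $\circuit_2$. I will handle this by observing that rotations only change amplitudes by a unit-modulus phase on the diagonal. Hence the support $\{(x,x'):\sem{\cdot}(x,x')\neq0\}$ of any segment is unchanged by adding, removing, or relocating rotations. The premise of \Cref{thm:phase-merge}, which depends only on supports, is therefore the same in the modified circuit as in the original decomposition at positions $(i,j)$, where we have bounded it. Chaining the resulting equivalences completes the induction, so on the good event $\circuit'\equiv\circuit$, which holds with probability at least $1-\binom{m}{2}2^{-k}$.
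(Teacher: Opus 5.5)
Your plan follows the paper's proof: bad pairs are failures of the premise of \Cref{thm:phase-merge} in the original circuit, you take a union bound via \Cref{thm:rand-soundness}, and then rewrite repeatedly. Your added bookkeeping is fine ($\sigma$ independent of $M$, the invariant on $M$, the $0 \bmod 2\pi$ case). The gap is the step you flag as the main obstacle. The support of $\sem{\cdot}$ of a segment is \emph{not} invariant under adding, removing, or moving rotations, because composition sums amplitudes and a phase can create or destroy cancellation. For example, $\Had\,\q{q};\Had\,\q{q}$ has diagonal support, $\Had\,\q{q};\Rz{\pi}\,\q{q};\Had\,\q{q}$ (which is $\X\,\q{q}$) has anti-diagonal support, and $\Had\,\q{q};\Rz{\theta}\,\q{q};\Had\,\q{q}$ has full support for generic $\theta$. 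The paper's closing remark that earlier rewrites ``do not affect the reachability condition'' makes the same leap.

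This is a genuine failure of the conclusion, not only of its justification. On one qubit take $\Rz{\pi/4}$ [$a$], $\Had$, $\Rz{\pi}$ [$i$], $\Had$, $\X$, $\Rz{\pi}$ [$b$], $\X$, $\Had$, $\X$, $\Rz{\pi/4}$ [$j$]. The keys are $v$, $u_1$, $u_2\oplus\mathbf{1}_k$, $u_3\oplus\mathbf{1}_k$, where $v$ is the initial draw and $u_1,u_2,u_3$ are the Hadamard draws. The $a$--$b$ segment acts as $I$ and the $i$--$j$ segment acts as $-I$, so both pairs satisfy the premise in $\circuit$, while the other four pairs violate it. Consider the event $u_2\oplus\mathbf{1}_k=v$ and $u_3\oplus\mathbf{1}_k=u_1\neq v$, which has probability about $2^{-2k}$. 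On it no pair is bad in your sense. Yet the algorithm first (soundly) merges $a$ into $b$, which places $\Rz{5\pi/4}$ inside the $i$--$j$ segment and gives that segment full support. It then merges $i$ into $j$. A direct computation shows the output differs from $\circuit$ by more than a global phase.

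The fix is to define the bad event through symbolic parities. Call $i<j$ bad if $\hat{\sigma}^{(i)}(\q{q_i})\neq\hat{\sigma}^{(j)}(\q{q_j})$ as parities yet the keys coincide. The proof of \Cref{thm:rand-soundness} bounds each such event by $2^{-k}$: the randomized state is the symbolic state under a uniform valuation, and then \Cref{lem:parity-eq} applies. So your union bound is unchanged.

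On the good event, every merge joins two positions with identical parities. Rotations are no-ops for the symbolic transfer functions, so the partially rewritten circuit has exactly the same symbolic states as $\circuit$ at every position. Now apply \Cref{lem:symbolic-soundness}, extended across two segments as in the proof of \Cref{thm:rand-soundness}, to the \emph{rewritten} circuit's own decomposition. This gives the premise of \Cref{thm:phase-merge} for that circuit directly, and no support-invariance claim is needed.

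Alternatively, keep your argument but replace the support by the \emph{path} support: pairs linked by a chain of individually nonzero gate transitions. That relation really is invariant under rotation edits, contains the true support, and is exactly what the symbolic-soundness induction controls. With it, your bad events should be failures of the premise with respect to path supports.
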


\begin{proof}
Enumerate the rotations of $\circuit$ in circuit order as
\[
  \Rz{\theta_1}\ \q{q_1},\; \ldots,\; \Rz{\theta_r}\ \q{q_r},
\]
with $r \leq m$, and let $\sigma^{(i)}$ denote the randomized abstract state
immediately before the $i$-th rotation is processed.

Call a pair $i < j$ \emph{unsafe} if, writing $\circuit_1$ for the prefix
before the $i$-th rotation and $\circuit_2$ for the subcircuit between the
$i$-th and $j$-th rotations, there are basis states $x_0,x,x'$ with
$\sem{\circuit_1}(x_0,x) \neq 0$,
$\sem{\circuit_2}(x,x') \neq 0$, and
$x_\q{q_i} \neq x'_\q{q_j}$.
For each unsafe pair, let $C_{i,j}$ be the event that their randomized
bitstrings nevertheless coincide,
$\sigma^{(i)}(\q{q_i}) = \sigma^{(j)}(\q{q_j})$.
By \Cref{thm:rand-soundness}, $\Pr[C_{i,j}] \leq 2^{-k}$.
There are at most
$\binom{m}{2}$ such pairs, so
\[
  \Pr\!\left[\bigcup_{i<j} C_{i,j}\right]
  \;\leq\; \sum_{i<j} \Pr[C_{i,j}]
  \;\leq\; \binom{m}{2} \cdot 2^{-k}.
\]

Condition on the complement of this event.
We show that every merge performed by the algorithm is semantics-preserving.
If the hashmap lookup at rotation $j$ hits an entry created by rotation $i$,
then $\sigma^{(i)}(\q{q_i}) = \sigma^{(j)}(\q{q_j})$.
Under the conditioned event, the pair $i,j$ is therefore not unsafe, so the
side condition of \Cref{thm:phase-merge} holds.
The merge deleting the earlier emitted rotation and adding its angle to the
current rotation is the phase-folding rewrite (\Cref{thm:phase-merge}).
All other gates are emitted unchanged.
Moreover, rotations do not change basis states, so earlier phase-folding
rewrites do not affect the reachability condition used by later applications of
\Cref{thm:phase-merge}.
\end{proof}

\paragraph{Complexity}
Our algorithm runs in time linear in the number of qubits $n$ and the number of gates $m$ in a given circuit. (Typically, $n$ is significantly smaller than $m$.) The following theorem captures that.
Intuitively, the algorithm makes a single forward pass through the circuit.
The only time it needs to \emph{backtrack} is when it needs to delete an already emitted rotation. This can be performed by keeping track of the address of the emitted rotation (a node in a linked list) and deleting it.

\begin{theorem}[Complexity]
\label{thm:complexity}
For a fixed bitstring width $k$, \Cref{alg:fold} runs in expected $O(n + m)$ time, where $n$ is the number of qubits and $m$ is the number of gates.
\end{theorem}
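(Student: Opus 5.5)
We account for the cost of \Cref{alg:fold} directly, splitting it into initialization and the main loop. Initialization draws one uniform $k$-bitstring per qubit and creates an empty hashmap $M$ and an empty output $\circuit'$. For fixed $k$ each draw is $O(1)$ word operations, so this phase costs $O(n)$. We represent $\sigma$ as an array indexed by qubits, so reads and writes of $\sigma(\q{q})$ take $O(1)$. We represent $\circuit'$ as a doubly linked list, so appending a gate is $O(1)$.

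Next we bound the work per loop iteration. For a non-rotation gate $g \in \{\CX,\X,\Had\}$, the transfer function $\rasem{g}$ updates at most one entry of $\sigma$. It uses either a bitwise XOR of two $k$-bit strings, an XOR with $\mathbf{1}_k$, or one fresh uniform draw, each $O(k) = O(1)$ for fixed $k$. Appending $g$ is then $O(1)$.

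For a rotation $\Rz{\theta}\ \q{q}$, the work is a bounded number of steps:
\begin{itemize}
  \item read $u = \sigma(\q{q})$;
  \item one lookup of $M(u)$;
  \item possibly one deletion from $\circuit'$;
  \item an angle addition and a test modulo $2\pi$;
  \item possibly one removal from and one insertion into $M$;
  \item one append.
\end{itemize}

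The step I expect to need care is the deletion of the earlier emitted rotation. Done naively, it would require searching $\circuit'$. To avoid this, I would strengthen what $M$ stores: alongside $(\theta', \q{q'})$ it keeps a pointer to the linked-list node where $\Rz{\theta'}\ \q{q'}$ was appended. Deleting a known node of a doubly linked list is $O(1)$. This pointer is set at the append on the rotation branch, and it stays valid because that node is only ever removed together with the key $u$ in the same step.

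The remaining ingredient is expected $O(1)$ hashmap operations on $k$-bit keys. I would assume a standard hashing scheme, such as universal hashing or the randomness of the keys themselves in the spirit of Zobrist hashing. Under that assumption each lookup, insert and remove takes expected $O(1)$, independent of the analysis randomness. The map never holds more than $m$ entries, since each insertion corresponds to an emitted rotation, so load factors stay bounded with standard resizing at amortized $O(1)$ per operation.

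Summing over the $m$ gates with linearity of expectation gives expected $O(m)$ for the loop. Adding the $O(n)$ initialization yields expected $O(n+m)$, as claimed.
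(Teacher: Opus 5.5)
Your proposal is correct and follows the same accounting as the paper: $O(n)$ initialization plus expected $O(1)$ work per gate, since transfers are constant word-level operations and rotations need a constant number of expected-$O(1)$ hashmap operations, summed over the $m$ gates. You also make explicit two details the paper only sketches or asserts: storing the linked-list node pointer in $M$ so that deletion is $O(1)$ (the paper mentions this in the paragraph before the theorem), and drawing the hash function independently of the analysis randomness so that the expected-$O(1)$ hashmap bound is justified.
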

\begin{proof}
Initialization samples one bitstring per qubit, taking
$
  O(n) \text{ time.}
$
Each gate is processed in expected
$
  O(1) \text{ time:}
$
transfer functions touch at most two qubits and perform a constant number of
word-level XORs or a single uniform sample;
rotation handling performs a constant number of hashmap operations on $k$-bit keys,
which take expected $O(1)$ time.
The total is $O(n) + m \cdot O(1) = O(n + m)$.
\end{proof}

\paragraph{Choosing bitstring width $k$}
In practice, $k$ is fixed as a small
multiple of the machine word width. Our implementation uses $k=128$, giving an
error probability of $\sim 10^{-21}$ even for circuits with a \emph{billion}
gates---negligible in any practical setting. Thus, for all intents and purposes in our target regime,
bitstring operations are constant time and the circuit-size scaling is linear.

If, instead, one asks for a constant global error probability $\varepsilon$ as
the circuit size $m$ grows asymptotically, then $k$ must also grow with $m$:
by the correctness bound of \Cref{thm:alg-correctness}, it suffices to choose
$k \geq 2\log_2 m + \log_2(1/\varepsilon)$. In that regime, the total expected
time is $O((n+m)\log m)$ for constant $\varepsilon$.

\section{Implementation and Evaluation}
\label{sec:evaluation}

We describe our implementation and evaluation of the optimization technique presented above.

\paragraph{Implementation}
We implemented our algorithms in a tool called \ours in roughly 7{,}000 lines of Rust.
The tool is a command-line utility that ingests a subset of OpenQASM~2.0~\cite{cross2017openqasm} and
emits optimized OpenQASM~2.0. 
Before phase folding, \ours makes a pass over the circuit to remove
redundant sequences of gates (e.g., $XX$, $\Had\Had$).

\paragraph{Research questions}
We aim to answer the following research questions:
\begin{enumerate}
  \item How does \ours compare to existing optimizers in
    runtime and $T$-count reduction?
  \item How does \ours scale to very large circuits, with millions to
    hundreds of millions of gates?
  \item How does \ours perform with a symbolic abstract state
    that tracks exact parity formulas instead of random bitstrings?
\end{enumerate}

\paragraph{Benchmarks}
We evaluate \ours on two benchmark suites spanning five orders of magnitude in
circuit size.

The first is the standard
suite~\cite{amy2025linear} from the Feynman optimizer, which has
become the \emph{de facto} benchmark set for $T$-count optimization. It
collects arithmetic and oracle circuits drawn from the quantum-algorithms
literature: Galois-field multipliers ($\mathrm{GF}(2^k)$ for $k$ ranging from
$4$ to $256$), modular and carry-lookahead adders, hidden-weighted-bit
functions, Barenco Toffoli decompositions, Hamming-weight circuits, and
several reversible-logic benchmarks (Grover oracles, cycle and mod-$5$
circuits, etc.). Sizes in this suite range from tens of gates up to circuits
with hundreds of thousands of gates after \cliffordt decomposition.

The second is the Cobble benchmark suite~\cite{yuan2024cobble}, which contains larger and more
structurally varied circuits drawn from two application domains:
(i)~Hamiltonian simulation circuits and (ii)~quantum linear-algebra kernels, including Chebyshev
polynomial evaluation, Laplacian filtering, matrix inversion, ordinary
least-squares ridge regression, and spectral thresholding. 

Across both suites, we use the Clifford+$T$ forms of the circuits: any
$R_z(\theta)$ rotations with non-$\pi/4$-multiple angles are decomposed using
\texttt{gridsynth}~\cite{ross2016optimal} at $\epsilon = 10^{-10}$. 
This ensures a
uniform comparison across tools by running each tool on Clifford+$T$ input.
We run all tools on an Apple M3 CPU with 24 GB of RAM, and we set a one-hour timeout for each benchmark.
Throughout, we used $k=128$ for the randomized analysis.

\paragraph{Correctness validation}
We validated the correctness of our implementation at two scales.
For small circuits (up to 6 qubits), we generated hundreds of thousands of random circuits and verified that the output of \ours is equivalent to the input by comparing their matrix representations directly.
For larger circuits, we used Feynman's path-sum-based equivalence verifier~\cite{amy2018verification} to check equivalence between input and output on a majority of the feasible circuits in Feynman's suite.
All validation runs succeeded: every checked output of \ours was equivalent to its input.

\subsection{RQ1: Comparison to existing tools}
\label{sec:rq1}

\begin{takeaway}
On both benchmark suites, \ours closely matches the $T$-count reductions of VOQC,
QuiZX, PyZX, and Feynman to within a few percent---and typically matches them
exactly on the largest circuits---while running up to four orders of magnitude
faster. On the largest circuits, \ours completes in milliseconds while other tools time out even with a one-hour budget.
\end{takeaway}

We compare \ours against five optimizers spanning the main
families of $T$-count reduction techniques.
VOQC~\cite{hietala2021verified} is a verified optimizer that applies the
Nam et al.~\cite{nam2018automated} rewriting and pass-scheduling protocol.
\pyzx~\cite{kissinger2020pyzx} and its Rust reimplementation QuiZX~\cite{kissinger2021quizx} optimize
via ZX-calculus rewriting and graph extraction.
Feynman~\cite{amy2014polynomial,amy2025linear}, whose phase-polynomial affine
analysis is the closest in spirit to our approach, performs phase folding
through an exact symbolic representation of the parities at each rotation.\footnote{We ran Feynman with the \texttt{-apf} flag to run the fastest (affine) phase-folding pass, which is the most comparable to \ours.}
We also evaluated FastTODD~\cite{vandaele2024lower}, but found it to be significantly slower than all of the above tools on our benchmarks; we therefore exclude it from the comparison.

\paragraph{Feynman suite}
\Cref{fig:feynman-results} reports $T$-count reduction and runtime on the Feynman
suite. 
We focus on the benchmarks with more than $2{,}000$ \cliffordt gates: below that size all tools finish in milliseconds, so runtime comparisons are dominated by noise.
All tools produce very similar $T$-count reduction profiles across
the suite: \ours matches VOQC, QuiZX, PyZX, and Feynman exactly on the large
Galois-field multipliers ($\mathrm{GF}(2^k)$ for $k=8,\dots,64$), and \ours is within
a few percent of the others on the remaining circuits.

On runtime, \ours is multiple orders of magnitude faster than the alternatives across
the entire suite. On \texttt{gf2\string^64\_mult}---a $70{,}075$-gate
circuit and the largest on which VOQC, QuiZX, and Feynman all still
complete---\ours finishes in $17.9$\,ms, compared to $5.3$\,minutes for
VOQC, $17.7$\,minutes for QuiZX, $30.7$\,minutes for Feynman, and a
one-hour timeout for PyZX. The gap typically reaches four orders of magnitude on large benchmarks and widens
with circuit size. Indeed, on the largest circuits, all tools time out after one hour while \ours completes the optimization in milliseconds.

\begin{figure}[t]
  \centering
  \begin{subfigure}{\linewidth}
    \centering
    \includegraphics[width=0.99\linewidth]{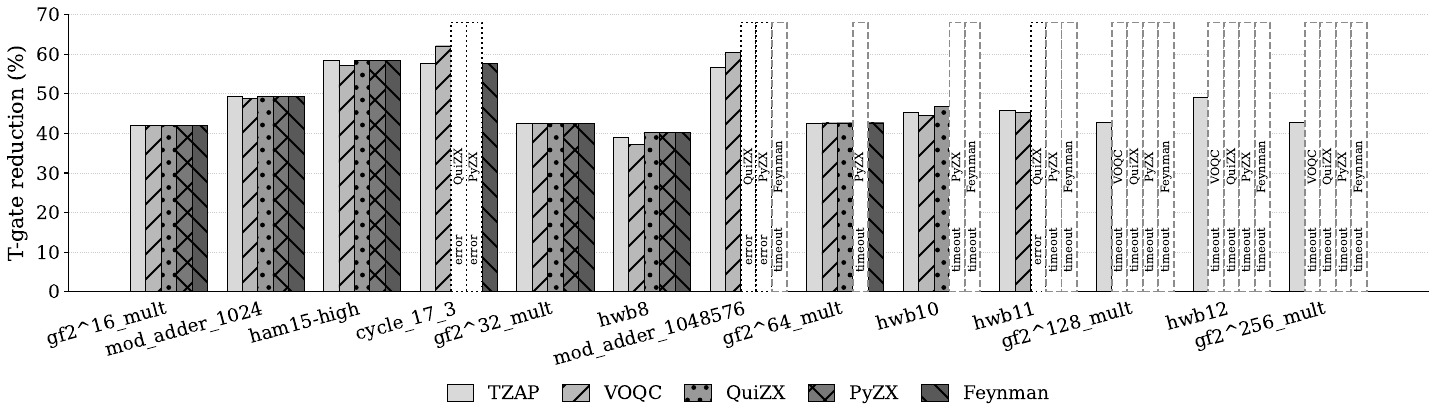}
    \caption{$T$-count after optimization.}
    \label{fig:feynman-t}
  \end{subfigure}\\[1ex]
  \begin{subfigure}{\linewidth}
    \centering
    \includegraphics[width=0.99\linewidth]{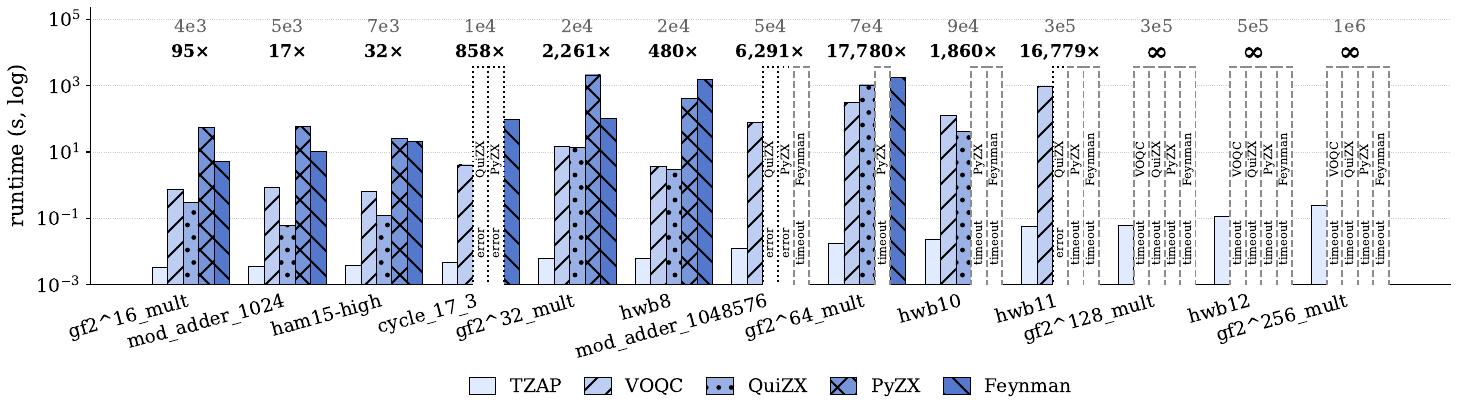}
    \caption{Runtime (\textbf{log} scale). 
    On top of each benchmark, we show the number of gates (gray) and the speedup of \ours over the fastest other tool (black).}
    \label{fig:feynman-runtime}
  \end{subfigure}
  \caption{Results on the Feynman benchmark suite. 
  The circuits are sorted by total gate count.
  Missing bars indicate
    that the tool crashed or timed out ($> 1$\,hour).}
  \label{fig:feynman-results}
\end{figure}

\begin{figure}[t]
  \centering
  \begin{subfigure}{1\linewidth}
    \centering
    \includegraphics[width=.65\linewidth]{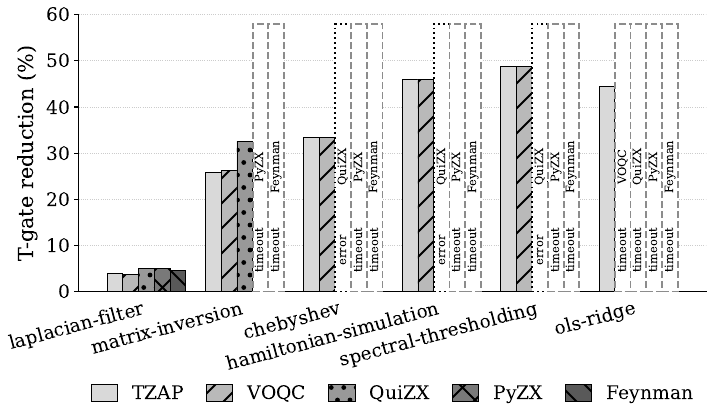}
    \caption{$T$-count after optimization.}
    \label{fig:cobble-t}
  \end{subfigure}\\[1ex]
  \begin{subfigure}{1\linewidth}
    \centering
    \includegraphics[width=.65\linewidth]{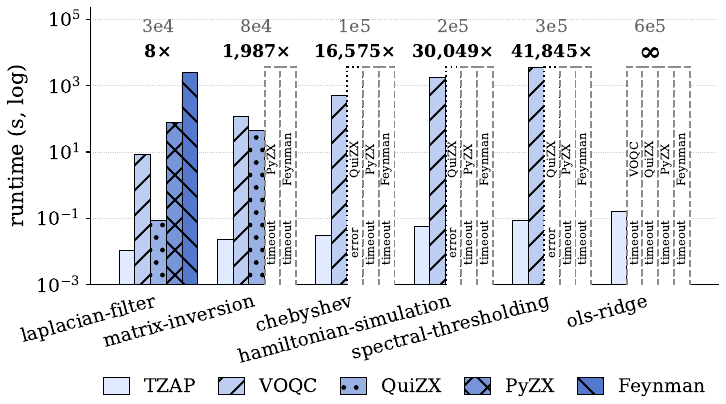}
    \caption{Runtime (\textbf{log} scale).}
    \label{fig:cobble-runtime}
  \end{subfigure}
  \caption{$T$-count reduction and runtime on the Cobble benchmark suite.
   Circuits are sorted by total gate count.
    Missing bars indicate that the tool crashed or timed out
    ($> 1$\,hour) on that circuit. For (b), on top of each benchmark, we show the number of gates (gray) and the speedup of \ours over the fastest other tool (black).}
  \label{fig:cobble-results}
\end{figure}

\paragraph{Cobble suite}
\Cref{fig:cobble-results} reports the results on the Cobble suite.
On all six circuits, \ours completes in well under a second---between $11$\,ms and $164$\,ms.

The other tools fare considerably worse.
VOQC is the only competitor to complete on most of the suite, taking from $8.5$\,s on \texttt{laplacian-filter} up to roughly an hour on \texttt{spectral-thresholding}, and timing out on \texttt{ols-ridge} (with $255{,}692$ input $T$ gates).
QuiZX, PyZX, and Feynman fail on the majority of Cobble benchmarks: QuiZX's ZX-extraction pass crashes on three of the six circuits and times out on \texttt{ols-ridge}, and PyZX and Feynman hit the $1$-hour timeout on all but \texttt{laplacian-filter}.

On \texttt{laplacian-filter}, the one circuit where all five tools run to completion, \ours is $775\times$ faster than VOQC, $7{,}020\times$ faster than PyZX, and $233{,}000\times$ faster than Feynman.

$T$-count reduction is competitive across the suite. On \texttt{chebyshev},
\texttt{hamiltonian-simulation}, and \texttt{spectral-thresholding}, \ours
lands within $0.3\%$ of VOQC's $T$-count. On \texttt{laplacian-filter},
QuiZX achieves $1$ percentage point more $T$-count reduction than \ours
($5.0\%$ vs.\ $3.9\%$), and on \texttt{matrix-inversion} QuiZX achieves
$6.7$ percentage points more ($32.6\%$ vs.\ $25.9\%$) at the cost of a
$1{,}990\times$-longer runtime. On
\texttt{ols-ridge}, where every other tool times out, only \ours produces a
result, reducing the $T$-count by $44\%$ in $164$\,ms.

\paragraph{Isolation experiments}
Tools like VOQC and Feynman bundle phase folding with additional
simplification passes, for example, gate cancellation passes, propagation of NOTs to the end of the circuit, Hadamard cancellation, and others. VOQC follows the Nam et al.~\cite{nam2018automated} pass-scheduling
protocol, and Feynman---even when restricted to its affine phase-folding mode
(\texttt{-apf})---iteratively applies gate cancellation and phase folding
a number of times. To isolate the cost of phase folding itself and put the
runtime comparison on equal footing, we modified the source code of VOQC and Feynman to perform
a single phase-folding pass. For Feynman, the pass is \texttt{PhaseFold} and for VOQC it is \texttt{merge-rotations}.

\Cref{fig:cobble-pf} shows the same isolation experiment on the Cobble suite
(the Feynman suite is reported in \Cref{app:feynman-pf}),
comparing \ours (\ourspf) against the phase-folding-only variants of VOQC
(VOQC-PF) and Feynman (Feynman-PF).
Three trends stand out.

First, restricting each tool to a single phase-folding pass naturally lowers
its $T$-count reduction relative to its full pipeline---the additional
cancellation and rewriting passes expose further folding opportunities---but
the three phase-folding passes then agree closely with one another.
On \texttt{chebyshev}, for instance, the full pipelines of \ours and VOQC
reduce the $T$-count by $33.5\%$ and $33.4\%$, whereas all three
phase-folding passes land at $14.4\%$--$14.5\%$; on \texttt{matrix-inversion}
they agree to within $0.1$ percentage point ($11.7\%$--$11.8\%$).

Second, isolating phase folding speeds up the two competitors by very
different amounts. VOQC sees a smaller speedup---between $0.9\times$ and
$2.3\times$ (e.g., $523$\,s down to $224$\,s on \texttt{chebyshev})---confirming
that phase folding already dominates its runtime. Feynman, by contrast, speeds
up substantially: on \texttt{laplacian-filter} it drops from $2{,}557$\,s to
$21$\,s ($124\times$), and Feynman-PF completes three of the six circuits
within the one-hour budget, where full Feynman completes only one.

Third, despite these speedups the orders-of-magnitude gap to \ours remains.
\ourspf finishes every Cobble circuit in at most $90$\,ms, whereas on
\texttt{chebyshev} VOQC-PF still takes $224$\,s and Feynman-PF $819$\,s---about
$11{,}000\times$ and $41{,}000\times$ slower---and both still time out or crash
on the largest circuits that \ours handles in milliseconds.
The gap also widens with circuit size: the competing phase-folding passes
appear, empirically, not to scale linearly with growing circuit size, whereas
\ours does (\Cref{sec:rq2}).

In short, the runtime gap to \ours appears to be primarily driven by the representation and the per-gate cost of the analysis, not by extra passes layered on top. 
This is perhaps expected, as VOQC looks for pairs of rotations, creating a quadratic search space, and both Feynman and VOQC maintain a symbolic representation of the parities at each rotation, which is more expensive to manipulate than our randomized bitstring abstraction.
(We study this in more detail in \Cref{sec:rq3}.)

\begin{figure}[t]
  \centering
  \begin{subfigure}{.49\linewidth}
    \centering
    \includegraphics[width=\linewidth]{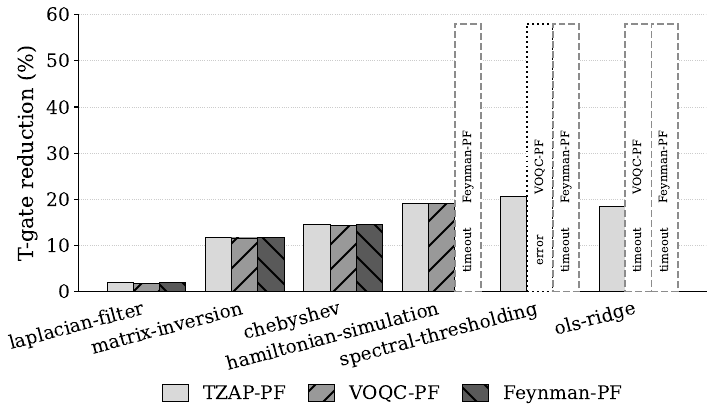}
    \caption{$T$-count after optimization.}
    \label{fig:cobble-pf-t}
  \end{subfigure}
  \begin{subfigure}{.49\linewidth}
    \centering
    \includegraphics[width=\linewidth]{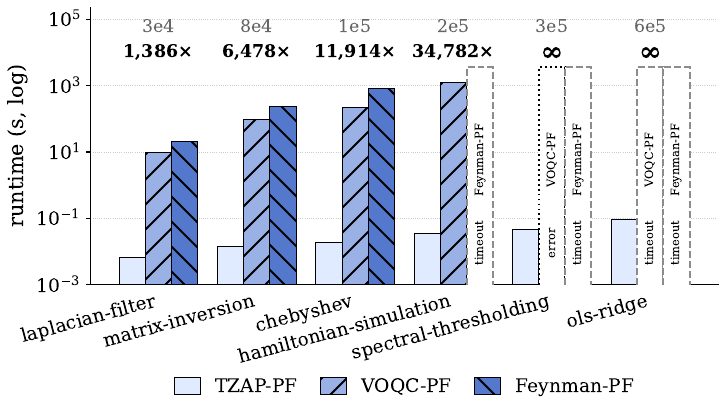}
    \caption{Runtime (\textbf{log} scale).}
    \label{fig:cobble-pf-runtime}
  \end{subfigure}
  \caption{Phase-folding isolation on the Cobble benchmark suite. The circuits are sorted by total gate count.
    Missing bars indicate that the tool crashed or timed out
    ($> 1$\,hour) on that circuit. For (b), on top of each benchmark, we show the number of gates (gray) and the speedup of \ours over the fastest other tool (black).}
  \label{fig:cobble-pf}
\end{figure}

\subsection{RQ2: Scalability}
\label{sec:rq2}

\begin{takeaway}
\ours scales linearly over four orders of magnitude in circuit size,
matching the $O(n+m)$ analysis of \Cref{sec:rand-algorithm}. It optimizes a
$148$\,million-gate Cobble instance in under a minute and a
$\sim\!500$\,million-gate Tower instance in about two minutes, while still
delivering an average $T$-count reduction of $> 50\%$ across the scaling
experiments.
\end{takeaway}

To stress-test scalability beyond the sizes found in the standard benchmark
suites, we took two of the Cobble circuits---Hamiltonian simulation and
matrix inversion---and used Cobble to generate increasingly large instances of each by varying the parameters, with
the largest reaching $148$\,million gates.
We then ran \ours on each instance and measured end-to-end runtime.
\Cref{fig:scalability} plots runtime against input $T$-count on a log--log
scale.

\begin{figure}[t]
  \centering
  \includegraphics[width=0.33\linewidth]{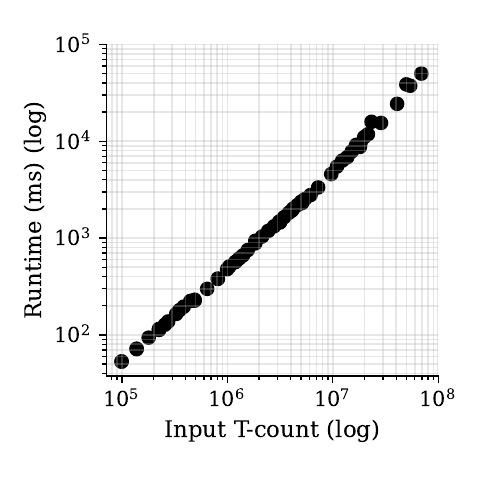}
  ~~~
  \includegraphics[width=0.33\linewidth]{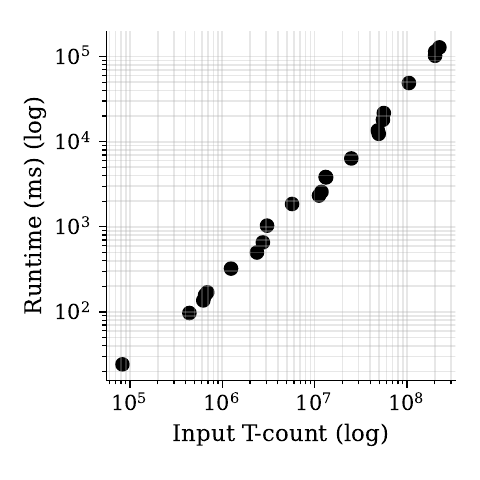}
  \caption{Runtime of \ours on increasingly large circuits (log--log scale). \emph{Left:} Hamiltonian simulation \& matrix inversion. \emph{Right:} Tower benchmarks (data structures).}
  \label{fig:scalability}
\end{figure}

The runtime scales linearly in the size of the circuit, as predicted by the
$O(n + m)$ analysis of \Cref{sec:rand-algorithm}: on the log--log plot the data
follow a straight line of slope one across four orders of magnitude. Even on
the largest $148$\,million-gate instance, \ours finishes in under a minute.
We note that these are non-trivial circuits: \ours is doing substantial work on each circuit,
reducing the $T$-count by an average of $57\%$ across the scaling
experiments.

To further ensure that the scalability results hold,
we repeat the same exercise on the Tower benchmarks~\cite{yuan2022tower}, a very different suite of quantum data-structure manipulation circuits. Again, we used Tower to generate large instances reaching
roughly $500$\,million gates (\Cref{fig:scalability}, right). The same linear
trend holds, with \ours completing the largest instance in about two
minutes, and again delivering significant $T$-count reductions.

\subsection{RQ3: Symbolic abstract state}
\label{sec:rq3}

\begin{takeaway}
Replacing the randomized bitstring abstraction with an exact symbolic
parity abstraction---following the same algorithmic skeleton of
\Cref{alg:fold}---makes the analysis between $6\times$ and $25\times$
slower, while producing the same $T$-count reductions.
On certain circuits, the slowdown grows with circuit size, resulting in \emph{many orders of magnitude} of difference on large circuits.
\end{takeaway}

A natural question to ask is how much of \ours's runtime advantage comes from
the randomized abstraction itself, as opposed to \Cref{alg:fold}, which
performs a single linear scan through the circuit.
To isolate this, we implemented the symbolic variant of the analysis---which we first sketched out
in \Cref{sec:overview} and formalized in \Cref{sec:soundness} for reasoning about soundness---within \ours: the abstract state still maps
each qubit to a parity, but parities are represented as symbolic
formulas, rather than as random
bitstrings. All other components of \Cref{alg:fold}---the
per-gate transfer functions, the table of seen parities, and the merging
rule---are unchanged.
We note that the symbolic analysis is similar to that implemented in Feynman and VOQC.

In principle this variant is no longer a linear-time algorithm: parity
formulas can grow with the number of Hadamard-introduced variables, so
hashing and equality checks are no longer constant time. Hadamard gates
do bound this growth---each Hadamard resets the parity of the qubit it
touches to a fresh variable---but the $\CX$ network between Hadamards mixes these fresh
variables together, and the \emph{weight} of the resulting parity
formulas (the number of variables they XOR together) grows with the
density of Hadamards in the circuit. Every symbolic operation---XOR, hashing, equality---then pays a cost proportional to this
weight, on every gate. We made a careful effort to implement the
symbolic variant well, using compact normal forms.

\begin{wrapfigure}{r}{0.38\linewidth}
  \centering
  \vspace{-1ex}
  \includegraphics[width=\linewidth]{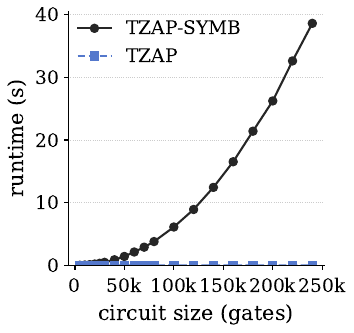}
  \caption{Symbolic variant (\ourssymb) versus \ours on the
    \texttt{hwb}-like family of \Cref{app:hwb-like}.}
  \label{fig:hwb-like-scaling}
  \vspace{-2ex}
\end{wrapfigure}

Across our benchmarks, the symbolic variant is $6\times$ to $25\times$
slower than \ours, with the $T$-count reductions matching \ours exactly,
as expected: the two analyses fold the same set of rotations, with high
probability. 
We notice that the slowdown is not uniform across circuits, but rather depends on the structure of the circuit and the density of Hadamards.
Circuits with regular, low-weight parity structure, such as the
arithmetic adders and Galois-field multipliers, sit at the low end
(around $6\times$), because their parity formulas stay small. The
largest slowdowns, up to $25\times$, come from the \texttt{hwb}
(hidden-weighted-bit) circuits, which place thousands of Hadamards on
only a dozen or so qubits: the dense interleaving of Hadamards and
$\CX$ gates produces high-weight parity formulas, exactly the case the
exact representation handles worst. The randomized abstraction, by
contrast, represents each parity as a single fixed-width bitstring
regardless of its weight, so its per-gate cost is unaffected.

To isolate this effect at scale, we constructed a parameterized family
of \texttt{hwb}-like circuits that distills the structure responsible
for the slowdown: an accumulator qubit collects one fresh
Hadamard-introduced variable per stage, so the largest parity formula
grows linearly with circuit size (the construction is described in
\Cref{app:hwb-like}).
\Cref{fig:hwb-like-scaling} plots both variants on this family with
linear axes: the symbolic variant's (\ourssymb) runtime grows quadratically with
circuit size---each gate pays a cost proportional to the current parity
weight---reaching $39$ seconds at $240$K gates, while \ours remains
linear and completes the same circuit in under $3$ milliseconds, a speedup of $13{,}000\times$!

\section{Related Work}
Quantum circuit optimization has attracted many different approaches, ranging from algebraic rewriting and ZX-calculus graph rewriting to numerical synthesis and learning-based superoptimization.
Our work fits the \emph{static analysis} tradition: a single forward pass over the circuit, propagating an abstract state through each gate and using the result to identify and fold redundant rotations.
This is analogous to classical data-flow analyses and constant-propagation passes, and it is closest in spirit to the linear analyses of \citet{amy2025linear}, VOQC~\cite{hietala2021verified}, \citet{nam2018automated}, and QUESO~\cite{xu2022synthesizing}, which also maintain an explicit symbolic representation of the parities at each rotation site.
We organize the related work into five groups: general circuit optimizers, $T$-gate optimizers (the focus of this paper), abstract interpretation for quantum programs, probabilistic static analysis, and symbolic/concolic execution.

\paragraph{Quantum circuit optimization}
VOQC~\cite{hietala2021verified} is a formally verified optimizer; it implements the Nam et al.~\cite{nam2018automated} pass-scheduling protocol, iterating gate cancellation and phase folding.
BQSKit~\cite{younis2022bqskit} and tket~\cite{sivarajah2020tket} are widely used end-to-end frameworks: BQSKit uses parameterized circuit instantiation as a core primitive for gate-count optimization and gate-set decomposition, and tket provides retargetable optimization and routing passes for NISQ devices.

A second line of work synthesizes the rewrite rules themselves.
QUESO~\cite{xu2022synthesizing} and Quartz~\cite{xu2022quartz} automatically derive provably correct rewrite rules for a target gate set.
GUOQ~\cite{xu2025optimizing} combines fast rewrite-rule-based optimization with slow search-based unitary resynthesis as in BQSKit.

A third line operates in the ZX-calculus~\cite{coecke2011interacting}. \citet{duncan2020graph} showed how to simplify circuits by converting to ZX-diagrams and applying various rewrite rules; \pyzx~\cite{kissinger2020pyzx} and its Rust reimplementation QuiZX~\cite{kissinger2021quizx} implement such ideas and other optimizations.

\paragraph{$T$-gate optimization}
In fault-tolerant quantum computing---e.g., using the surface code~\cite{fowler2012surface}---$T$ gates cannot be implemented transversally~\cite{bravyi2005universal} and instead require costly magic-state-distillation protocols~\cite{bravyi2005universal,bravyi2012magic,gidney2024cultivation}, making each logical $T$ gate significantly more expensive than a Clifford gate.
$T$-count minimization is therefore a central concern for large-scale quantum computation, further reinforced by the fact that classical simulation of \cliffordt circuits runs in time exponential in the $T$-count~\cite{bravyi2016simulation}.
Van de Wetering and Amy~\cite{vandewetering2024hard} prove that $T$-count optimization (and related metrics) is NP-hard in general, motivating the heuristic and approximate approaches.

Selinger~\cite{selinger2013quantum} showed that any \cliffordt circuit can be rewritten to have $T$-depth one, establishing foundational bounds.
Amy et al.~\cite{amy2014polynomial} gave a polynomial-time algorithm for $T$-count and $T$-depth optimization via matroid partitioning.
Amy and Mosca~\cite{amy2019tcount} connect minimum $T$-count to minimum-weight Reed--Muller codewords, an idea that has been exploited in subsequent work by  \citet{heyfron2019efficient} and \citet{vandaele2024lower} in the TODD family of optimizers.
Amy and Lunderville~\cite{amy2025linear} recently recast phase folding as a relational static analysis, like our formulation here, and extended it to programs with arbitrary classical control flow.

\citet{yuan2024tcomplexity} work one layer up the stack: their Spire compiler restructures quantum control flow \emph{before} circuit-level compilation to avoid the  $T$-count blowup that fault-tolerant control constructs can otherwise introduce, complementary to circuit-level phase folding.

\paragraph{Abstract interpretation for quantum programs}
Abstract interpretation~\cite{cousot1977abstract} has been applied to quantum programs to prove properties without full simulation.
\citet{yu2021quantum} develop an abstract domain over density matrices, enabling verification of assertions for quantum circuits.
\citet{bichsel2023abstraqt} introduce Abstraqt, which compresses stabilizer simulation into a single abstract summand to overapproximate circuit behavior.
Both target \emph{verification}; we share the abstract-interpretation philosophy but use the abstraction to drive an \emph{optimization}, accepting a small probability of unsoundness in exchange for a fast, linear-time algorithm.

\paragraph{Probabilistic static analysis}
Our work can be seen as a straight-line, quantum-circuit instantiation of Gulwani and Necula's \emph{random interpretation}~\cite{gulwani2003discovering,gulwani2004gvn,gulwani2005precise,gulwani2005thesis}, a randomized form of abstract interpretation~\cite{cousot1977abstract} that trades absolute soundness for a vanishingly small error probability in exchange for simpler and faster algorithms.
We share the core philosophy: track an affine abstraction by random instances, and use a hash-style equality check in place of symbolic reasoning.

\paragraph{Symbolic execution}
Symbolic execution~\cite{king1976symbolic} runs a program on symbolic inputs and maintains a formula describing all reachable states.
The analysis underlying tools employing phase folding can be viewed as symbolic execution over quantum circuits: each qubit is assigned a symbolic parity expression over the initial inputs and fresh Hadamard-path variables, updated by transfer functions for each gate, and two rotations can be folded precisely when their parity expressions are provably equal.
Concolic execution~\cite{sen2005cute} blends concrete and symbolic execution in tandem: a program runs on a concrete input while simultaneously accumulating a symbolic formula for the current path, which is then solved to generate new concrete inputs that explore different paths.
Our approach can be seen as a further simplification: we run only a single concrete execution on a randomly sampled bitstring, dispensing with the symbolic component entirely.

\section{Discussion and Future Work}
We presented an algorithm for optimizing quantum circuits, focused on reducing the number of $T$ gates.
The key technical insight underlying our approach is a randomized static analysis that tracks per-qubit parities via random bitstrings, enabling a linear-time phase-folding algorithm that is correct with high probability.
Our implementation, \ours, delivers $T$-count reductions comparable to state-of-the-art optimizers while running up to multiple orders of magnitude faster, completing in milliseconds optimizations that time out for other tools, and scaling linearly to circuits with hundreds of millions of gates.

Our approach raises a number of questions for future consideration.
First, our abstraction tracks per-qubit parities (XORs and negations) and treats Hadamards in a non-deterministic way, potentially losing optimization opportunities. Thus, a natural question to ask is whether we can enrich the analysis while maintaining its linear-time nature. For example, one approach may be to track superposition up to a fixed limit, and then introduce non-determinism to limit the explosion. This is an approach that was used by~\citet{bichsel2023abstraqt} in the context of quantum program verification, and it would be interesting to see if it can be adapted to the optimization setting.

Another potential extension is handling classical--quantum programs, like repeat-until-success loops, which are common in quantum algorithms and can be optimized by folding rotations across loop iterations. This would require extending our abstract domain to handle loops and control flow, perhaps by combining the techniques of Gulwani and Necula~\cite{gulwani2003discovering,gulwani2005thesis} with Amy and Lunderville's relational analyses for quantum programs~\cite{amy2025linear}.

Finally, it would be interesting to explore how our fast analysis can be used as a lightweight static analysis---for example, to prove assertions about parity properties of qubits at various points in the circuit, or to drive other optimizations beyond phase folding.

\begin{acks}
The authors thank Tom Reps, Charles Yuan, Abtin Molavi, Amanda Xu, Stavan Jain, Keshav Sharma, and Tianyi Hao for helpful discussions and feedback. The work is supported by NSF Award \#2212232.
\end{acks}

\bibliographystyle{ACM-Reference-Format}
\bibliography{refs}

\newpage
\appendix

\section{Soundness Details}
\label{app:soundness-details}

\begin{proof}[Proof of \Cref{lem:symbolic-soundness}]
By induction on the number of gates in $\circuit$.

\emph{Base case.}
$\circuit$ is the empty circuit.
Then $\sem{\circuit}$ is the identity relation, so $\sem{\circuit}(x, x') \neq 0$ implies $x' = x$; moreover $\hat{\sigma} = \hat{\sigma}_0$ and $U = \emptyset$, so $h$ is the empty assignment.
Indeed,
$\alpha_{x,h}(\hat{\sigma}(\q{q})) = \alpha_{x,h}(v_\q{q}) = x_\q{q} = x'_\q{q}$.

\emph{Inductive step.}
Let $\circuit = \circuit_0 \,;\, g$ for a single gate $g$, and let
$\hat{\sigma}_1 = \sasem{\circuit_0}(\hat{\sigma}_0)$, so that $\hat{\sigma} = \sasem{g}(\hat{\sigma}_1)$.
Let $U_0$ be the Hadamard variables of $\circuit_0$.

\emph{Inductive hypothesis:}
for all basis states $x, x'$ with $\sem{\circuit_0}(x, x') \neq 0$, there exists an assignment $h_0 \colon U_0 \to \{0,1\}$ such that $\alpha_{x,h_0}(\hat{\sigma}_1(\q{q})) = x'_\q{q}$ for every qubit $\q{q}$.

Suppose $\sem{\circuit}(x, x'') \neq 0$.
By the composition rule,
\[
  \sem{\circuit}(x, x'') \;=\; \sum_{x'} \sem{\circuit_0}(x, x') \cdot \sem{g}(x', x''),
\]
so at least one summand is nonzero: there is a basis state $x'$ with $\sem{\circuit_0}(x, x') \neq 0$ and $\sem{g}(x', x'') \neq 0$.
Let $h_0$ be the assignment given by the inductive hypothesis for the pair $(x, x')$, and write $\alpha_0 = \alpha_{x,h_0}$.
We now case on $g$, using the gate semantics of \Cref{fig:gate-semantics}.

\begin{itemize}
\item $g = \CX\ \q{c}\ \q{t}$:
then $x'' = x'[\q{t} \mapsto x'_\q{t} \oplus x'_\q{c}]$, and the analysis updates only $\q{t}$, setting $\hat{\sigma}(\q{t}) = \hat{\sigma}_1(\q{t}) \oplus \hat{\sigma}_1(\q{c})$.
No new variable is introduced, so take $h = h_0$:
\[
  \alpha_0(\hat{\sigma}(\q{t})) = \alpha_0(\hat{\sigma}_1(\q{t})) \oplus \alpha_0(\hat{\sigma}_1(\q{c})) = x'_\q{t} \oplus x'_\q{c} = x''_\q{t},
\]
and for every $\q{r} \neq \q{t}$, $\alpha_0(\hat{\sigma}(\q{r})) = \alpha_0(\hat{\sigma}_1(\q{r})) = x'_\q{r} = x''_\q{r}$.

\item $g = \X\ \q{q}$:
then $x'' = x'[\q{q} \mapsto x'_\q{q} \oplus 1]$ and $\hat{\sigma}(\q{q}) = \hat{\sigma}_1(\q{q}) \oplus 1$.
Take $h = h_0$:
$\alpha_0(\hat{\sigma}(\q{q})) = \alpha_0(\hat{\sigma}_1(\q{q})) \oplus 1 = x'_\q{q} \oplus 1 = x''_\q{q}$,
and all other qubits are untouched.

\item $g = \Rz{\theta}\ \q{q}$:
then $x'' = x'$ and $\hat{\sigma} = \hat{\sigma}_1$, so $h = h_0$ works.

\item $g = \Had\ \q{q}$:
then $x''$ agrees with $x'$ on every qubit except possibly $\q{q}$, and the analysis sets $\hat{\sigma}(\q{q}) = u$ for a fresh Hadamard variable $u$, so $U = U_0 \cup \{u\}$.
Extend the assignment to $u$: $h = h_0[u \mapsto x''_\q{q}]$.
Because $u$ is fresh, it occurs in no parity of $\hat{\sigma}_1$, so $\alpha_{x,h}$ agrees with $\alpha_0$ on every such parity.
Hence for every $\q{r} \neq \q{q}$,
$\alpha_{x,h}(\hat{\sigma}(\q{r})) = \alpha_0(\hat{\sigma}_1(\q{r})) = x'_\q{r} = x''_\q{r}$,
and $\alpha_{x,h}(\hat{\sigma}(\q{q})) = h(u) = x''_\q{q}$ by construction.
\end{itemize}

In every case we have produced an assignment of the Hadamard variables witnessing the pair $(x, x'')$, completing the induction.
\end{proof}

Note the soundness reading of the lemma:
since $\hat{\sigma}_0(\q{q}) = v_\q{q}$, we have $\alpha_{x,h}(\hat{\sigma}_0(\q{q})) = x_\q{q}$, so the initial and final parities are evaluated under one shared valuation.
So if two parities are equal---say $\hat{\sigma}_0(\q{q}) = \hat{\sigma}(\q{q'})$---then $x_\q{q} = x'_\q{q'}$ for \emph{every} pair with $\sem{\circuit}(x, x') \neq 0$: every equality the symbolic analysis claims is genuine.
Contrapositively, if $x_\q{q} \neq x'_\q{q'}$ for some realizable pair, the parities $\hat{\sigma}_0(\q{q})$ and $\hat{\sigma}(\q{q'})$ differ as parities.

\section{Phase-Folding Isolation on the Feynman Suite}
\label{app:feynman-pf}

\Cref{fig:feynman-pf} reports the phase-folding isolation experiment of
\Cref{sec:rq1} (``Isolation experiments'') on the Feynman suite, mirroring
\Cref{fig:cobble-pf} for the Cobble suite: \ours (\ourspf) against the
phase-folding-only variants of VOQC (VOQC-PF) and Feynman (Feynman-PF). As on
Cobble, the three passes agree closely on $T$-count where they complete, while
\ours runs orders of magnitude faster and completes every circuit.

\begin{figure}[t]
  \centering
  \begin{subfigure}{\linewidth}
    \centering
    \includegraphics[width=0.99\linewidth]{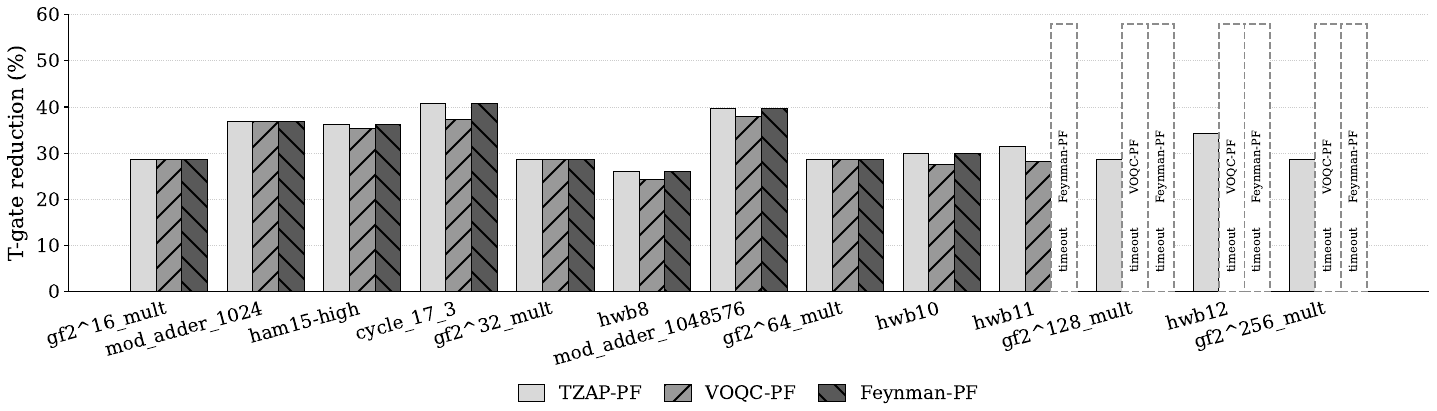}
    \caption{$T$-count after optimization.}
    \label{fig:feynman-pf-t}
  \end{subfigure}\\[1ex]
  \begin{subfigure}{\linewidth}
    \centering
    \includegraphics[width=0.99\linewidth]{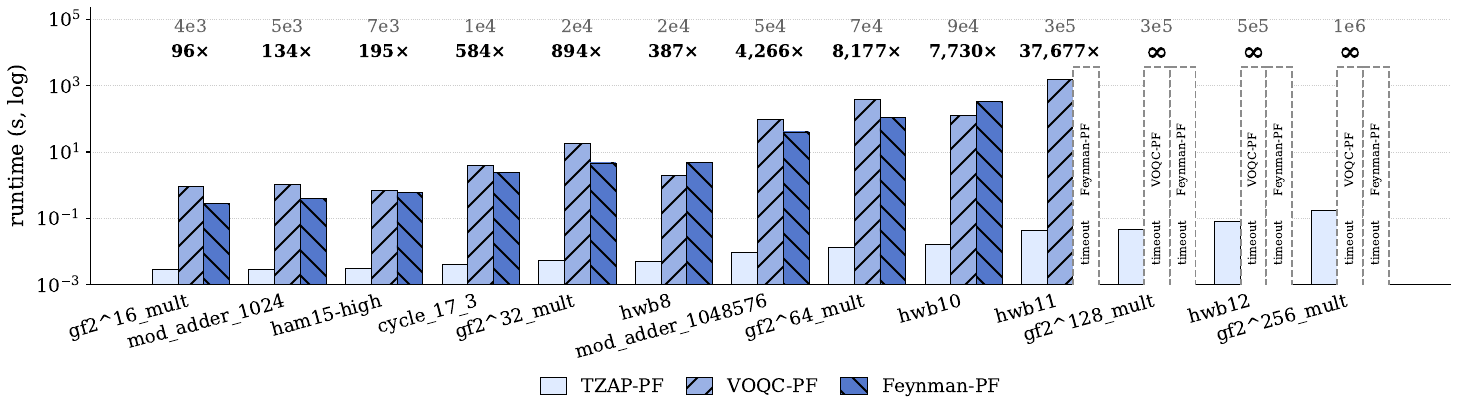}
    \caption{Runtime (\textbf{log} scale). On top of each benchmark, we show the number of gates (gray) and the speedup of \ours over the fastest other tool (black).}
    \label{fig:feynman-pf-runtime}
  \end{subfigure}
  \caption{Phase-folding isolation on the Feynman benchmark suite: \ours
    (\ourspf) versus the phase-folding-only variants of VOQC (VOQC-PF) and
    Feynman (Feynman-PF). The circuits are sorted by total gate count. Missing
    bars indicate that the tool crashed or timed out ($> 1$\,hour) on that
    circuit.}
  \label{fig:feynman-pf}
\end{figure}

\section{Construction of \texttt{hwb}-like Scaling Circuits}
\label{app:hwb-like}

\Cref{sec:rq3} attributes the worst-case slowdown of the symbolic
variant to the \texttt{hwb} circuits, whose dense interleaving of
Hadamards and $\CX$ gates produces high-weight parity formulas.
\Cref{alg:hwb-like} distills that structure into a parameterized family
in which the largest parity grows \emph{linearly} with circuit size,
isolating the effect for scaling experiments.

The construction designates $\q{q_0}$ as an \emph{accumulator} that
never receives a Hadamard. Each stage refreshes a round-robin
\emph{worker} qubit with a Hadamard---minting a fresh variable
$u_i$---and immediately XORs it into the accumulator with a $\CX$.
Because every $u_i$ is distinct, nothing ever cancels: after stage $i$
the accumulator's parity is $u_0 \oplus \cdots \oplus u_i$ (together
with its initial variable), so its weight grows by exactly one per
stage.  The trailing $\CX\ \q{q_0}\ \q{w}$ XORs the
accumulated parity into the worker, leaving it with a high-weight
parity (the current fresh variable cancels)---mirroring how
\texttt{hwb} spreads large parities across qubits---and this worker
parity is erased at the worker's next Hadamard.

\begin{algorithm}[H]
\caption{Generator for \texttt{hwb}-like circuits whose largest parity
  scales linearly with gate count.}
\label{alg:hwb-like}
\begin{algorithmic}[1]
\Require number of qubits $n \geq 2$; number of stages $s$
\Ensure circuit $\circuit$ with $5s$ gates whose largest parity has
  weight $s + 1$
\State $\circuit \gets \epsilon$
\For{$i = 0, \ldots, s - 1$}
  \State $\q{w} \gets \q{q_{1 + (i \bmod (n-1))}}$
    \Comment{round-robin worker qubit}
  \State append $\Had\ \q{w}$
    \Comment{fresh variable: $\hat{\sigma}(\q{w}) \gets u_i$}
  \State append $\CX\ \q{w}\ \q{q_0}$
    \Comment{$\hat{\sigma}(\q{q_0}) \gets \hat{\sigma}(\q{q_0}) \oplus u_i$}
  \State append $\Rz{\pi/4}\ \q{q_0}$
    \Comment{rotation on a weight-$(i+2)$ parity}
  \State append $\CX\ \q{q_0}\ \q{w}$
    \Comment{XOR the accumulated parity into the worker}
  \State append $\Rz{-\pi/4}\ \q{w}$
\EndFor
\State \Return $\circuit$
\end{algorithmic}
\end{algorithm}

On this family the symbolic analysis pays a cost proportional to the
accumulated parity's weight on every stage---the $\CX$ transfer merges
the variable sets, and each rotation keys a lookup on one---so the
total time grows quadratically with circuit size, while the randomized
abstraction folds every parity into a fixed-width bitstring and remains
linear. The number of qubits $n$ only sets how many gates separate two
visits to the same worker; the parity growth is independent of it.

\end{document}